\numberwithin{equation}{section}
\theoremstyle{plain}% default
\newtheorem{theorem}{Theorem}[section]
\newtheorem{proposition}[theorem]{Proposition}
\theoremstyle{definition}
\newtheorem{definition}{Definition}[section]
\newtheorem{assumption}{Assumption}[section]
\theoremstyle{remark}
\newtheorem{remark}{Remark}[section]
\newcommand{\RED}[1]{\textcolor{red}{#1}}
\definecolor{aog}{rgb}{0.0, 0.5, 0.0}
\newcommand{\ALOOP}[1]{\ALC@it\algorithmicloop\ #1%
  \begin{ALC@loop}}
\newcommand{\ENDALOOP}{\end{ALC@loop}\ALC@it\algorithmicendloop}
\newcommand*\mystrut[1]{\vrule width0pt height0pt depth#1\relax}
\newcommand\undermat[2]{%
  \makebox[0pt][l]{$\smash{\underbrace{\phantom{%
    \begin{matrix}#2\end{matrix}}}_{\text{$#1$}}}$}#2}
\newcommand\undermatt[2]{%
  \makebox[0pt][l]{$\smash{\underbrace{\mystrut{3.5ex}\phantom{%
    \begin{matrix}#2\end{matrix}}}_{\text{$#1$}}}$}#2}    
\newacronym{doe}{DoE}{Design of Experiments}
\newacronym{glm}{GLM}{Generalized Linear Model}
\newacronym{hvof}{HVOF}{High Velocity Oxygen Fuel}
\newacronym{cfd}{CFD}{High Velocity Oxygen Fuel}
\newacronym{ann}{ANN}{Artificial Neural Network}
\newacronym{ga}{GA}{Generic Algorithm}
\newacronym{ccd}{CCD}{Central Composite Design}
\newacronym{ml}{ML}{Maximum Likelihood}
\newacronym{mle}{MLE}{Maximum Likelihood Estimation}
\newacronym{aic}{AIC}{Akaike Information Criterion}
\newacronym{loocv}{LOOCV}{Leave One Out Cross Validation}
\newacronym{pfr}{PFR}{Powder Feed Rate}
\newacronym{sod}{SOD}{Stand Off Distance}
\newacronym{cv}{CV}{Coating Velocity}
\newacronym{tgf}{TGF}{Total Gas Flow}
\title{Predictive Modelling of Critical Variables for Improving HVOF Coating using Gamma Regression Models}
\author{
Wolfgang Rannetbauer\footnote{voestalpine Stahl GmbH, voestalpine-Stra{\ss}e 3, A-4020 Linz, Austria (wolfgang.rannetbauer@voestalpine.at), Corresponding author.} ,
Simon Hubmer\footnote{Johann Radon Institute for Computational and Applied Mathematics, Altenbergerstra{\ss}e 69, A-4040 Linz, Austria, (simon.hubmer@ricam.oeaw.ac.at)} ,
\\
Carina Hambrock\footnote{voestalpine Stahl GmbH, voestalpine-Stra{\ss}e 3, A-4020 Linz, Austria (carina.hambrock@voestalpine.com)} ,
Ronny Ramlau\footnote{Johannes Kepler University Linz, Institute of Industrial Mathematics, Altenbergerstra{\ss}e 69, A-4040 Linz, Austria, (ronny.ramlau@jku.at)} \footnote{Johann Radon Institute for Computational and Applied Mathematics, Altenbergerstra{\ss}e 69, A-4040 Linz, Austria, (ronny.ramlau@ricam.oeaw.ac.at)}
}
\begin{document}

% Include the title
\maketitle

% Abstract
\begin{abstract}

Thermal spray coating is a critical process in many industries, involving the application of coatings to surfaces to enhance their functionality. This paper proposes a framework for modelling and predicting critical target variables in thermal spray coating processes, based on the application of statistical design of experiments (\acrshort{doe}) and the modelling of the data using generalized linear models (\acrshort{glm}s) with a particular emphasis on gamma regression. Experimental data obtained from thermal spray coating trials are used to validate the presented approach, demonstrating that it is able to accurately model and predict critical target variables. As such, the framework has significant potential for the optimization of thermal spray coating processes, and can contribute to the development of more efficient and effective coating technologies in various industries.

\smallskip
\noindent \textbf{Keywords:} Thermal spray coating; High velocity oxygen fuel coating; Surface technology; Generalized linear models; Central composite design; Maximum Likelihood Estimation
% 65J22 - Numerical Analysis - Inverse Problems
% 65J20 - Numerical Analysis - Improperly posed problems; regularization
% 47A52 Operator Theory - Ill-posed problems, regularization
% 78A10 Optics, electromagnetic theory - Physical Optics
\end{abstract}

% % % % % % % % % % % % %
% Start of the sections %
% % % % % % % % % % % % %

% % % % % % % % % % % % % %
% Section - Introduction  %
% % % % % % % % % % % % % %
\section{Introduction}

Thermal spraying is a surface modification process that involves the deposition of a coating material onto a substrate by heating and accelerating a feedstock material through a spray gun. The high-velocity oxygen fuel (\acrshort{hvof}) spraying technique, schematically depicted in Figure~\ref{fig_HVOF_general}, represents a sophisticated and intricate thermal spray process that relies on the combined kinetic and thermal energy of the sprayed particles to produce coatings with exceptional properties, which makes it a subject of great interest and ongoing research in the field of materials engineering \cite{prasanna2018studies}.

Numerous techniques have been employed to model and forecast the properties of coatings produced via \acrshort{hvof} spraying. These methods include empirical models based on regression analysis, mechanistic models that simulate the physical and chemical processes occurring during spraying \cite{dongmo2008analysis, pan2016numerical, tabbara2011computational}, and hybrid models that integrate both approaches  \cite{tyagi2021evaluation}. Linear and nonlinear regression models have been used to establish a relationship between process variables and coating properties \cite{kuhnt2016residual, palanisamy2022effects, ribu2022experimental, tillmann2022statistical}, while computational fluid dynamics (\acrshort{cfd}) models have been utilized to simulate the gas flow, heat transfer, and particle behavior in the spray gun \cite{gu2001computational, li2009modeling}. In addition, artificial neural networks (\acrshort{ann}s) \cite{becker2021artificial, liu2019prediction, mojena2017neural, zhang2009characterizations} and genetic algorithms (\acrshort{ga}s) \cite{jalali2017fracture} have been implemented to optimize process conditions and predict coating properties. 

Despite the notable progress, the prediction of coating properties is still a challenging task, due to the complex interactions among the process variables, material properties, and the microstructure of the coatings. This study proposes a novel approach for modelling \acrshort{hvof} coatings through systematic variation of process variables that have received relatively limited attention in prior research. To achieve this, a Central Composite Design (\acrshort{ccd}) of experiments is employed, which enables efficient exploration of a vast parameter space. The subsequent analysis focuses on developing gamma regression models derived from generalized linear models (\acrshort{glm}s), which are particularly well-suited for modeling data with skewed, non-negative distributions. By integrating these adjusted process variables into the models, a promising opportunity arises to identify novel associations between process conditions and coating properties. 

Our approach provides insights into the intricate \acrshort{hvof} process, improving predictive models for key coating characteristics. The framework presented in this study supports the development of efficient coating technologies with enhanced attributes like wear resistance, corrosion protection, and oxidation resilience. These advancements have practical applications in industries such as aerospace, automotive, and manufacturing.

The structure of this paper is as follows: Section~\ref{sect_background} presents a comprehensive overview of the \acrshort{hvof} process, including a detailed explanation of the main factors influencing coating properties and particle in-flight characteristics. Section~\ref{sect_GLM} introduces the powerful application of generalized linear models (\acrshort{glm}s) and maximum likelihood estimation (\acrshort{mle}) to model and accurately estimate the dependence of coating properties on process conditions. This section also provides an overview of the theoretical foundations related to the asymptotic properties of \acrshort{mle} and hypothesis testing. The assessment of the predictive performance of the proposed \acrshort{glm}s is presented in Section~\ref{sect_pred}. Section~\ref{sect_application} examines the statistical design of experiments (\acrshort{doe}) and central composite design (\acrshort{ccd}) as a potent approach for efficient data collection on various levels of factors in the thermal spray coating process. The findings of the proposed framework applied to experimental data is presented in Section~\ref{sect_results}, with a specific focus on the precise prediction of critical target variables. Finally, Section~\ref{sect_conclusion} concludes the paper with a discussion on the effectiveness and potential of the proposed framework for predicting critical target variables in thermal spraying processes. It highlights the contribution of the framework to the development of more efficient coating technologies.

% % % % % % % % % % % % % % % % %
% Section - Technical Background %
% % % % % % % % % % % % % % % % % 
\section{Technical Background}\label{sect_background}

Thermal spraying is a versatile and widely used surface engineering technique that involves the deposition of coatings on the surface of a substrate to enhance its functional properties, such as wear resistance, corrosion resistance, and thermal insulation. The thermal spray coating process typically involves the application of thermal and kinetic energy to induce partial liquefaction of the coating material, thereby accelerating its projection towards the substrate surface. The amount of thermal and kinetic energy depends on the thermal spray coating technique. Various techniques, such as flame spraying, plasma spraying, arc spraying, and high-velocity oxygen fuel (\acrshort{hvof}) spraying can be used for coating using different types of coating material such as powder or wire. In this work we focus on the gas-fuel \acrshort{hvof} technology, which is described in more detail below.

\acrshort{hvof} thermal spraying has gained significant attention in recent years due to its ability to produce high-quality coatings with superior mechanical and chemical properties \cite{herman2000thermal}. The gas-fuel \acrshort{hvof} process creates its thermal energy by combustion of a mixture of oxygen and fuel gas, typically propane, methane or hydrogen, in a high-pressure chamber \cite{fauchais2014thermal}. The kinetic energy in the thermal spray process is created by its specific geometric convergent-divergent nozzle design which accelerates the gas stream to supersonic velocities. The kinetic energy is transferred to the sprayed material particles, causing them to partially melt and deform upon impact with the substrate. This results in coatings with high density and superior adhesion \cite{davis2004handbook}, almost independent of the thermal spray material composition (metallic, ceramic, cermet).

\begin{figure}[ht!]
    \centering
    \includegraphics[width=0.9\textwidth]{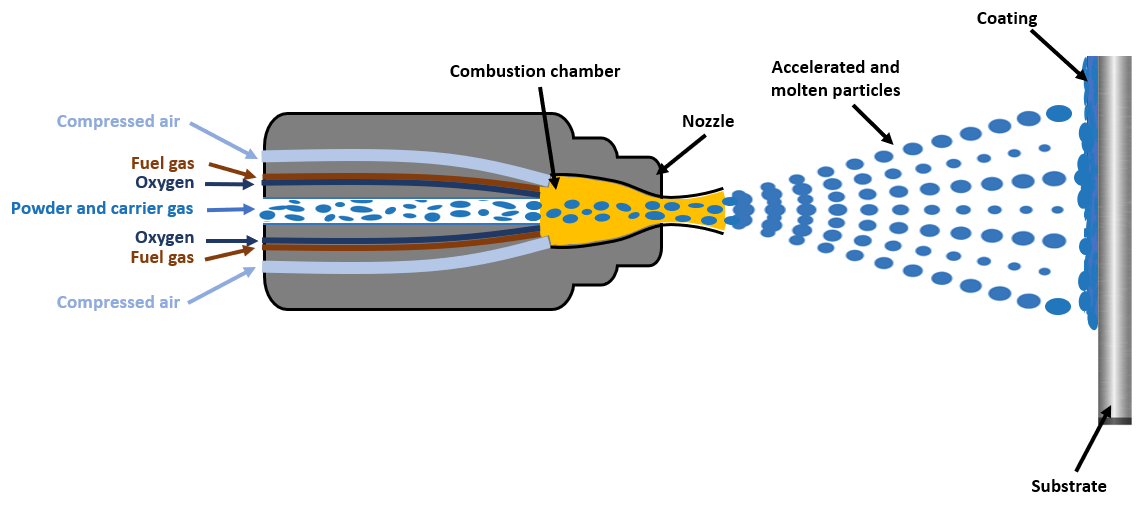}
    \caption{Schematic depiction of the High Velocity Oxygen Fuel (\acrshort{hvof}) process.}
    \label{fig_HVOF_general}
\end{figure}

Figure~\ref{fig_HVOF_general} shows a schematic depiction of the \acrshort{hvof} process, where the distinct symmetric structure of the torch becomes evident \cite{dongmo2008analysis}. The diagram reveals an axial axis of symmetry, i.e., a balanced and mirrored arrangement of components along a horizontal line within the \acrshort{hvof} system. In the combustion chamber, a defined but variable fuel gas reacts with oxygen, leading to combustion. Compressed air is used as a shroud gas and forms a cover inside the nozzle and outside in the spray plume. The powder coating material is inserted axially and is fed by using nitrogen as a carrier gas. Every characteristic of the gases and the powder, as well as the process attributes, influences combustion and consequently impacts the kinetic and thermal energy transferred to the spray particles. Influencing factors are for example:
\begin{itemize}
    \item Characteristics of gases (e.g., pressure, flow rate, temperature),
    \item Additional process variables (e.g., the ratio of the combustion gases, powder feed rate, spraying distance),
    \item Powder characteristics (e.g., size, chemical composition, density).
\end{itemize} 
The influence of these factors on the thermal and kinetic energy of the spray material can be measured by using in-situ camera equipment that is able to simultaneously detect the temperature and velocity of the sprayed particles. Process input conditions not only affect the particle characteristics but also the performance of the process, as well as the quality of the coating. Relevant measures for the performance of coating processes are
\begin{itemize}
    \item Deposition rate,
    \item Deposition efficiency.
\end{itemize} 
These performance indicators are especially important when taking into account economic aspects of the thermal spray coating process. However, the coating characteristics are the most important properties, since they influence the industrial performance. The desirable properties are, e.g.,
\begin{itemize}
    \item Specific porosity of coating,
    \item Specific hardness of coating,
    \item Specific phase and chemical composition,
    \item Specific thickness of coating.
\end{itemize} 
In order to obtain best wear resistance, corrosion resistance, or thermal insulation, a specific combination of coating properties is essential. Additionally, proper surface preparation prior to spraying is necessary to ensure maximum adhesion strength and achieve the desired coating performance characteristics. Table~\ref{tab:factors_all} provides a comprehensive overview of the various \acrshort{hvof} process variables and coating characteristics discussed above.

\begin{table}[h]
    \resizebox{\textwidth}{!}{
    \begin{tabular}{llllll}  
    \toprule
    \multicolumn{3}{c}{Controlled variables (factors)} & \multicolumn{1}{c}{In-flight-} & \multicolumn{1}{c}{Performance-} & \multicolumn{1}{c}{Coating-} \\
    \cmidrule(r){1-3}
    Gas    & Process & Powder & \multicolumn{1}{c}{properties} & \multicolumn{1}{c}{indicators} & \multicolumn{1}{c}{properties}\\
    \midrule
    pressure      & stoichiometric ratio    & size     & temperature & deposition rate & thickness\\
    flow rate          & powder feed rate        & chemical composition & velocity & deposition efficiency & roughness \\
    temperature       &  spraying distance    & density & & & phase composition \\
    &&&&&chemical composition\\
    &&&&&porosity\\
    &&&&&hardness\\
    \bottomrule
    \end{tabular}
    }
    \caption{Examples of \acrshort{hvof} process variables and characteristics.}
    \label{tab:factors_all}
\end{table}

The formulation of a robust mathematical relationship that links the input conditions controlling the spraying process, the dynamics of particles during flight, and the resulting coating characteristics is essential. Such a correlation not only facilitates a deeper understanding of the fundamental physical mechanisms underlying thermal spraying but also enables the optimization of deposition conditions to achieve the desired coating properties.

The accurate prediction of coating properties remains a challenge, primarily due to the complex and non-linear nature of the relationships between process attributes and coating properties. Achieving high accuracy in property prediction is often elusive, considering the multifaceted interactions at play. Hence, the demand for a reliable and precise prediction model becomes apparent, as it can serve as a catalyst for optimizing the process and elevating the overall quality of the resulting coatings.

% % % % % % % % % % % % % % % % % % %
% Section - Mathematical Modelling  %
% % % % % % % % % % % % % % % % % % %
\section{Predictive Modelling of \acrshort{hvof} Coating Properties}\label{sect_GLM}

The following section is dedicated to the derivation of mathematical models that enable the prediction of coating properties for the \acrshort{hvof} process. For this, we propose the use of Generalized Linear Models (\acrshort{glm}s) along with Maximum Likelihood Estimation (\acrshort{mle}) as an effective approach for modelling and estimating the expected values of target variables, conditioned on the explanatory variables (= process variables) in the \acrshort{hvof} process. We adopt the theoretical framework and notation of \cite{fahrmeir2013generalized}, to develop the statistical model used in this study, which can be expressed by the following general equation:
    \begin{equation*}
        \mathbb{E}(y_i|\boldsymbol{x_i}) = \mu_i = g^{-1}(\boldsymbol{x_i}^T\boldsymbol{\beta}),
    \end{equation*}
where $\boldsymbol{\mu}= (\mu_i)_{i=1}^n$ is a vector denoting the conditional mean of the response variable $ \mathbf{y}=(y_i)_{i=1}^n$, i.e., the coating properties of interest. The coefficient vector $\boldsymbol{\beta} = (\beta_0, \beta_1, \dots, \beta_k)$ encodes the effects of the explanatory variables ($\mathbf{x}_1, \dots, \mathbf{x}_k$), i.e., the potentially influential process conditions with observations $\mathbf{x}_i = (1, x_{1}^{i},x_{2}^{i},\dots,x_{k}^{i})_{i=1}^n$. The mean vector $\boldsymbol{\mu}$ is related to the linear combination $\boldsymbol{x_i}^T\boldsymbol{\beta}$ of the process input attributes by a one-to-one mapping $g(\cdot)$, which is often referred to as the \textit{link function} \cite{fahrmeir1985consistency}. In regression analysis, the assumption of additive random errors allows for the decomposition of the response $y_i$ into a systematic component $\mathbb{E}(y_i|\boldsymbol{x_i})$ and a random component $\epsilon_i$, yielding the equation:
    \begin{equation*}
        y_i = g^{-1}(\boldsymbol{x_i}^T\boldsymbol{\beta}) + \epsilon_i,
    \end{equation*}
where the measurement error $\epsilon_i$ is assumed to be independent of the covariates. The primary objective of regression analysis is to use the data $(y_i,\boldsymbol{x_i})_{i=1}^n$ to estimate the systematic component $\mathbb{E}(y_i|\boldsymbol{x_i})$.

Based on the broad background presented in Section \ref{sect_background}, the use of Bayesian generalized linear models to model coating properties appears to be a logical choice. The Bayesian framework offers distinct advantages over classical inference by accommodating prior knowledge regarding model parameters. This facilitates the incorporation of insights into the effects of process input variables, enhancing parameter estimation accuracy. Even in the absence of specific information, non-informative priors like the uniform distribution or Jeffreys's prior can be specified \cite{ibrahim1991bayesian}. Nevertheless, a classical frequentist statistical approach is adopted in this study to develop a model that characterizes the relationship between input variables and coating properties because the knowledge of suitable priors for Bayesian modeling is absent in this context, and the use of non-informative priors does not offer substantial advantages over classical approaches. Subsequent investigations may explore the application of Bayesian GLMs in future research.

% Subsection - Generalized Linear Models
\subsection{Generalized Linear Models (\acrshort{glm}s)}

\acrshort{glm}s, as defined in \cite{nelder1972generalized}, include a broad range of useful statistical models and serve as a powerful tool for data analysis in various fields such as engineering, physics, and biology. They extend the concept of linear regression to handle non-normal response variables, such as binary or count data, by introducing a link function that relates the mean of the response variable to the linear predictor. Here, the effectiveness of \acrshort{glm}s in analyzing data from the \acrshort{hvof} process is explored, aiming to establish a comprehensive model equation that effectively captures the conditional dependence of coating characteristics on process input variables.

In the context of \acrshort{glm}s, the response vector $ \mathbf{y} = (y_1, y_2, ..., y_n)$, is modeled as a vector that follows any distribution from the exponential family, where each element $y_i$ is distributed with a mean $\mu_i$ and variance $\sigma_i^2$. To model the relationship between the response variable $\mathbf{y}$ and the predictor variables $\mathbf{x}_i = (x_{1}^{i},x_{2}^{i},\dots,x_{k}^{i})$, a predictor vector $\boldsymbol{\eta} = (\eta_1, \eta_2, ..., \eta_n)$, with elements $\eta_i = \mathbf{x}_i^T \boldsymbol{\beta}$ is introduced, which is linked to the mean vector $\boldsymbol{\mu}= (\mu_1, \mu_2, ..., \mu_n)$ via a link function $g$, as expressed by
    \begin{equation*}
        g(\boldsymbol{\mu}) = \boldsymbol{\eta} \,.
    \end{equation*}
This formulation allows for the incorporation of multiple predictors and the estimation of their effects on the response variable. The choice of link function $g$ depends on the distribution of the response variable $\mathbf{y}$ and can vary between models. For instance, when the response variable is binary $(y_i=0$ or $y_i=1)$, the logit link function is frequently employed, connecting the mean of the response $\boldsymbol{\mu}$ variable to the logarithm of the odds ratio. This can mathematically be expressed as:
    \begin{equation*}
        g(\boldsymbol{\mu}) = \log\Big(\frac{\boldsymbol{\mu}}{1-\boldsymbol{\mu}}\Big) = \boldsymbol{\eta} \,.
    \end{equation*}
Similarly, if the response variable is a non-negative variable, the link function can be logarithmic, relating the mean of the response variable to the linear predictor via
    \begin{equation} \label{loglink}
        g(\boldsymbol{\mu}) = \log(\boldsymbol{\mu}) = \boldsymbol{\eta} \,.
    \end{equation}

After selecting the appropriate link function, the \acrshort{glm} implies that the linear predictor can be represented as a linear combination of the predictor variables $\mathbf{x}_i = (x_{1}^{i},x_{2}^{i},\dots,x_{k}^{i})$, i.e.,
    \begin{equation} \label{linearComb}
        \eta_i = \mathbf{x}_i^T \boldsymbol{\beta} =  \beta_0 + \beta_1x_{1}^{i}+ \beta_2x_{2}^{i} + \dots + \beta_kx_{k}^{i} \,,
    \end{equation}
where $\beta_0$ is called the intercept, and $\beta_1,\beta_2, \dots, \beta_k$ are the regression coefficients. The estimation of these coefficients will be performed using the method of maximum likelihood estimation, detailed in Section \ref{mle_estim}. This method quantifies the regression coefficients that are most probable given the observed data $(y_i,\boldsymbol{x_i})_{i=1}^n$, considering the assumed conditional distribution of the response variable $\mathbf{y}$ and the selected link function $g$.

% Subsection - Application of GLMs to HVOF Data
\subsection{Application of Generalized Linear Models to \acrshort{hvof} Data}

In the \acrshort{hvof} process, the response variables of interest include the coating properties, such as roughness, porosity, layer thickness, and hardness, as well as the in-flight properties, such as particle temperature and particle velocity (see Table~\ref{tab:factors_all}). Since these response variables are continuous and positive, it is necessary to choose an appropriate probability distribution and link function for modelling them. The gamma distribution with a log link function is a common choice for non-negative continuous data \cite{fahrmeir2013generalized} and thus, we will employ it in our analysis.

% Subsection - The Gamma Distribution
\subsubsection{The Gamma Distribution}

A continuous, non-negative random variable $Y$ is said to follow a gamma distribution with shape parameter $a>0$ and rate parameter $b>0$, denoted as $Y \sim G(a,b)$, if it has the density function:
    \begin{equation*}
        f(y|a,b) = \frac{b^a}{\Gamma(a)} y^{a-1}\exp(-by) \,, \qquad y > 0 \,.
    \end{equation*}
The expected value and variance are given by $\mathbb{E}(Y) = \frac{a}{b}$ and $\mathbb{V}(Y) = \frac{a}{b^2}$. An illustrative comparison of gamma distributions with varying shape and rate parameters is provided in Figure \ref{fig_pdf}. Occasionally, the gamma distribution is defined via an alternative parameterization. Depending on the expected value $\mu$ and the scale parameter $\nu > 0$, the density is then given by:
    \begin{equation*}
        f(y|\mu,\nu) = \frac{1}{\Gamma(\nu)} \Big( \frac{\nu}{\mu} \Big)^\nu\exp\Big(-\frac{\nu}{\mu}y\Big) \,,
        \qquad y > 0 \,,
    \end{equation*}
where $\mu = \mathbb{E}(Y)$ is the parameter of interest and the variance $\nu = \mathbb{V}(Y)$ is considered as a nuisance parameter, meaning that the value of $\nu$ is not the main focus of the analysis. In other words, while $\nu$ plays a role in determining the shape of the density function, it is not the parameter that one aims to estimate or draw conclusions about.

\begin{figure}[h]
    \includegraphics[width=1\textwidth]{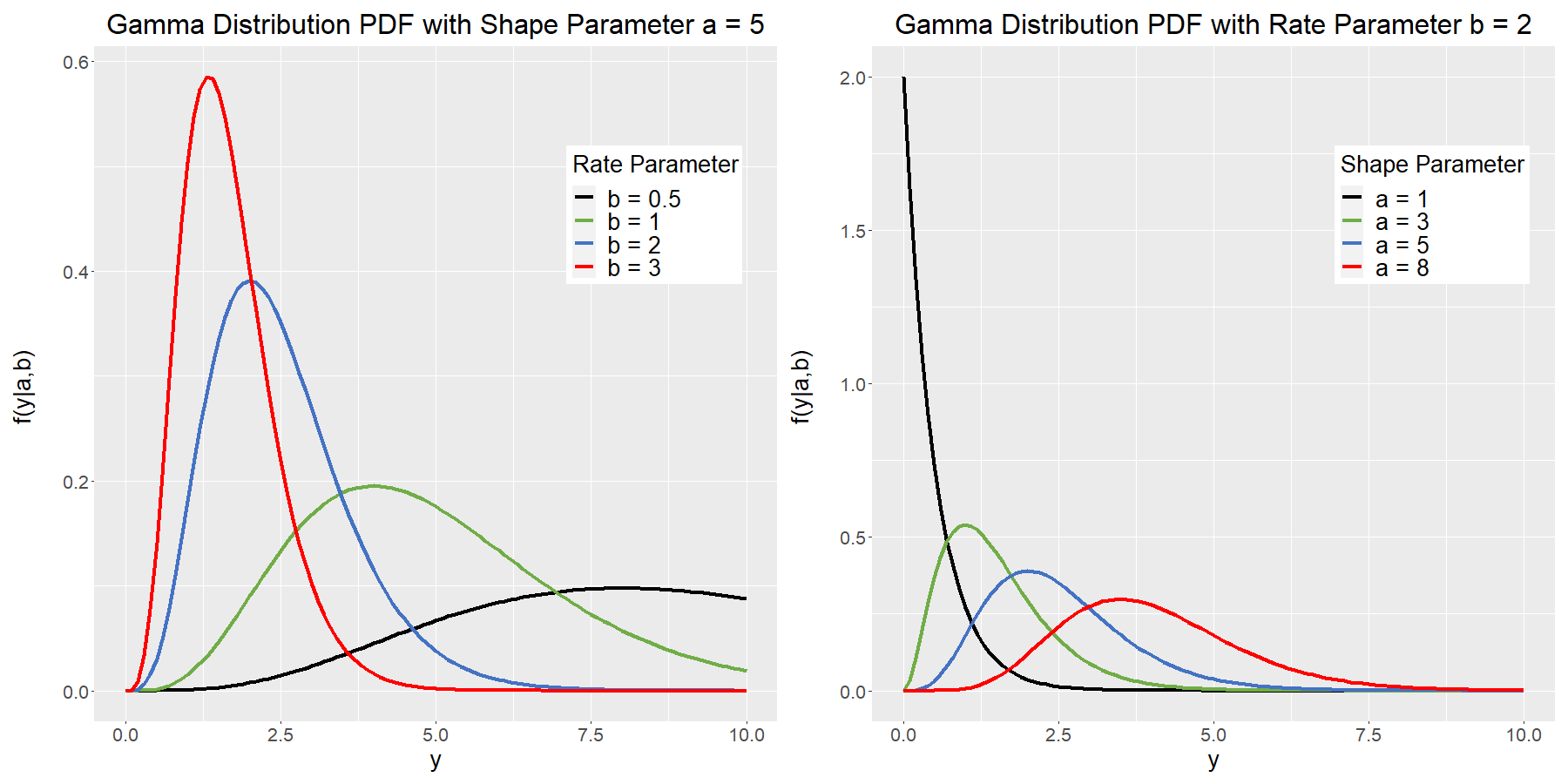}\vspace{-0.6cm}
    \caption{Comparison of gamma distributions with varied parameters. The left panel displays the probability density function (PDF) of a gamma-distributed random variable $y$ with a fixed shape parameter $a=5$ and varying rate parameters $b$. The right panel illustrates the PDF of a gamma-distributed random variable $y$ with a fixed rate parameter $b=2$ and varying shape parameters $a$.}
    \label{fig_pdf}
\end{figure}

% Subsubsection - MLE Gamma Regression
\subsubsection{Maximum Likelihood Estimation Gamma Regression}\label{mle_estim}

The likelihood function is a fundamental concept in statistical inference that quantifies the plausibility of the observed data under a given statistical model. As a consequence of the conditional independence of $y_i$ given $\boldsymbol{x_i}$, it is defined as the product of the probability density function of each observation in the sample, conditioned on the parameter values. In other words, the likelihood is the joint probability of the observed data, viewed as a function of the parameters. Therefore, the product of the likelihood contributions, defined in \ref{lik1}, yields the likelihood for the observed data, providing a basis for inference on the unknown parameters. However, it is essential to acknowledge that the assumption of conditional independence of the response variable $\mathbf{y}$ may not always hold in practical applications, particularly in complex systems or processes where various factors can influence the response variables.

In the context of this study on the coating process, assuming conditional independence implies consistent equipment- and process performance across observations, with response variable variation attributed solely to considered explanatory variables. Nonetheless, factors like equipment stability, environmental conditions, or procedural variations could introduce dependencies among response variables, challenging this assumption. To address this concern, careful execution of the experiments was performed, encompassing thorough validation of equipment functionality and accurate examination for potential issues, such as instrument cleaning and procedural consistency. The impact of changing environmental conditions can be disregarded due to the operations being conducted within a coating booth equipped with a continuous suction system. These measures were taken to validate the conditional independence assumption and ensure the reliability of the experimental results.

An empirical investigation of the dataset considered in this study reveals a notable right-skewness across all examined response variables. While marginal distributional properties alone do not necessarily dictate the choice of distributional family for modeling conditional means, this observation suggests a departure from symmetric distributional patterns. Furthermore, these variables exhibit non-negativity and continuity. To account for these specific distributional characteristics, the assumption of a gamma regression framework is made (cf. Figure \ref{fig_pdf}). In addition to the assumed gamma distribution, other distributions such as the log-normal distribution or the inverse Gaussian distribution can also be considered in this setting. Since similar results are anticipated for these alternative distributions, the emphasis on the specific distribution assumption is relaxed. Therefore, the assumption made in this study is that the response variables are conditionally gamma-distributed, with the expected value depending on the explanatory variables. Future investigations are warranted to explore the implications of alternative distributional assumptions in this domain.

The gamma regression model with a logarithmic link function assumes that the response variable $y_i$ for $i = 1, \dots, n$ follows a Gamma distribution with mean $\mu_i$ and scale parameter $\nu > 0$. The mean $\mu_i$ is modeled as a function of the covariates $\mathbf{x}_i = (x_{1}^{i},x_{2}^{i},\dots,x_{k}^{i})$ through the logarithmic transformation, which is defined as $\mu_i = \exp(\beta_0+\beta_1x_{1}^{i}+ \dots + \beta_kx_{k}^{i}) = \exp(\eta_i)$, where $\eta_i$ is the linear predictor.

For simplicity, we consider a gamma regression model with a single covariate~$\mathbf{x}$. Nevertheless, it is important to emphasize that extending the model to multiple covariates is possible and adheres to the same theoretical framework outlined here. In particular, the incorporation of additional predictors would require a simple augmentation of the linear predictor $\eta_i$ to account for their effects. Therefore, the model can be readily extended to encompass more intricate predictor configurations, as warranted by the research question at hand. The univariate regression model used in this work, in which a single dependent variable is considered, can be summarized in the form of the following equations:
    \begin{equation}\label{model}
    \begin{split}
        Y_i &\overset{\mathrm{iid}}{\sim} G(\mu_i) \quad\quad i=1,\dots,n\, , 
        \\
        \mu_i(\boldsymbol{\beta}) &= \mu_i(\boldsymbol{\beta}|x_i) = \exp(\beta_0+\beta_1x_i) = \exp(\eta_i(\boldsymbol{\beta}))\,, 
        \\
        \eta_i(\boldsymbol{\beta}) &= \eta_i(\boldsymbol{\beta}|x_i) = \beta_0+\beta_1x_i \,.
    \end{split}
    \end{equation}
The characterization of HVOF coating quality involves multiple aspects, prompting consideration of a multivariate modeling approach. Such an approach allows for the estimation of covariate effects and the assessment of a variance-covariance matrix, which quantifies correlations among different quality criteria and potentially enhances predictive capabilities. However, due to practical limitations such as the limited sample size and the need for additional data for variance-covariance matrix estimation, a univariate approach was chosen. Moreover, within the specific applications where the coated material is applied, only a subset of properties listed in Table \ref{tab:factors_all} is relevant, thus a univariate approach was considered more appropriate.

 Given the univariate model in \eqref{model} and assuming conditional independence of $y_i|\boldsymbol{x_i}$ it is now possible to define the likelihood function.
\begin{definition}
The likelihood function $L(\boldsymbol{\beta}| \mathbf{y})$ of the observed data for the model described in \eqref{model} is defined as the product of the likelihood contributions $L_i(\boldsymbol{\beta}| y_i)$, i.e.,
\begin{equation}\label{lik1}
        L(\boldsymbol{\beta}| \mathbf{y}) := \prod_{i=1}^{n} L_i(\boldsymbol{\beta}|y_i) 
        = \prod_{i=1}^{n} \frac{1}{\Gamma(\nu)} \Big( \frac{\nu}{\mu_i(\boldsymbol{\beta})} \Big)^\nu\exp\Big(-\frac{\nu}{\mu_i(\boldsymbol{\beta})}y_i\Big) \,.
    \end{equation}
\end{definition}
The log-likelihood function $\ell(\boldsymbol{\beta}| \mathbf{y}):= \log L(\boldsymbol{\beta}| \mathbf{y})$ is often preferred in statistical inference due to its numerical stability, computational simplifications, and theoretical properties \cite{fahrmeir2013generalized}.  Maximizing $\ell(\boldsymbol{\beta}| \mathbf{y})$, instead of \eqref{lik1} enables accurate parameter estimation and reliable statistical inference.

In the process of maximizing the log-likelihood function, the score function serves as an essential mathematical tool. It accurately measures the sensitivity of the log-likelihood to changes in the parameters of interest.

\begin{definition}\label{score}
The score function $\boldsymbol{s({\beta}| \mathbf{y})}$ is defined as the gradient of the log-likelihood function with respect to the model parameters, i.e., 
    \begin{equation*}
        \boldsymbol{s({\beta}| \mathbf{y})} := \nabla_{\beta} \ \ell(\boldsymbol{\beta}| \mathbf{y})\,.
    \end{equation*}
Furthermore, the maximum likelihood estimator (\acrshort{mle}) $\boldsymbol{\hat{\beta}}$ is defined as the solution of 
    \begin{equation*}
        \boldsymbol{s(\hat{\beta}| \mathbf{y})} = \mathbf{0} \,.
    \end{equation*}
\end{definition}

The score function $\boldsymbol{s({\beta}| \mathbf{y})}$ quantifies the rate of change of the log-likelihood function as the parameter values are varied, and provides a measure of the direction and magnitude of the parameter updates that increase the log-likelihood. It can be computed by either numerically or analytically differentiating $\ell(\boldsymbol{\beta}| \mathbf{y})$, which in our case leads to 

\begin{proposition} \label{prop1}
Let $\boldsymbol{s({\beta}| \mathbf{y})}$ represent the score function for a response vector $\mathbf{y}$, as defined in Definition \ref{score}, and let $\mathbf{y}$ consist of observed values $y_i$ from a random variable $Y_i$ following a gamma distribution, as described in \eqref{model}. Then
    \begin{equation*}
        \boldsymbol{s(\beta| \mathbf{y})} = \mathbf{X}^T \ \nu \ \Big( \frac{\mathbf{y}}{\boldsymbol{\mu}(\boldsymbol{\beta})} - 1 \Big) \,.
    \end{equation*}
Here, $\mathbf{X} = (\mathbf{1}\ \mathbf{x})$ represents the design matrix, consisting of the explanatory variable $\mathbf{x}~= (x_1, \dots, x_n)^T$, where $x_i \in \mathbb{R}$, $\mathbf{y} = (y_1, \dots, y_n)^T$ is the response vector with $y_i \in \mathbb{R}^+$, $\boldsymbol{\mu}(\boldsymbol{\beta}) = (\mu_1(\boldsymbol{\beta}), \dots, \mu_n(\boldsymbol{\beta}))^T$ is the mean vector with $\mu_i(\boldsymbol{\beta}) \in \mathbb{R}^+$, given in \eqref{model}, and $\mathbf{1} = (1, \dots, 1)^T$ is a vector of ones. 
\end{proposition}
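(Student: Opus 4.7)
The plan is a direct computation via the chain rule. First I would form the log-likelihood by taking the logarithm of the product in \eqref{lik1}, expanding the gamma density and collecting terms. Explicitly, $\ell(\boldsymbol{\beta}|\mathbf{y}) = \sum_{i=1}^{n}\bigl[\nu\log\nu - \log\Gamma(\nu) + (\nu-1)\log y_i - \nu\log\mu_i(\boldsymbol{\beta}) - \nu y_i/\mu_i(\boldsymbol{\beta})\bigr]$, and I would note that only the last two summands depend on $\boldsymbol{\beta}$, so every other term drops out once we differentiate.

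The key simplification is to pass through the link. Since $\log\mu_i(\boldsymbol{\beta}) = \eta_i(\boldsymbol{\beta}) = \beta_0 + \beta_1 x_i$, the $\boldsymbol{\beta}$-dependent part of the log-likelihood is $-\nu\sum_{i=1}^{n}\eta_i(\boldsymbol{\beta}) - \nu\sum_{i=1}^{n} y_i \exp(-\eta_i(\boldsymbol{\beta}))$. Then I would compute the partial derivatives one coefficient at a time. Using $\partial\eta_i/\partial\beta_0 = 1$ and $\partial\eta_i/\partial\beta_1 = x_i$, together with $\exp(-\eta_i) = 1/\mu_i(\boldsymbol{\beta})$, the chain rule yields
\begin{equation*}
\frac{\partial\ell}{\partial\beta_0} = \sum_{i=1}^{n}\nu\Bigl(\frac{y_i}{\mu_i(\boldsymbol{\beta})} - 1\Bigr), \qquad \frac{\partial\ell}{\partial\beta_1} = \sum_{i=1}^{n}\nu\, x_i\Bigl(\frac{y_i}{\mu_i(\boldsymbol{\beta})} - 1\Bigr).
\end{equation*}

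Finally I would package the two components as a gradient. With $\mathbf{X} = (\mathbf{1}\ \mathbf{x})$, the first component is the inner product of the column of ones with the vector $\nu(\mathbf{y}/\boldsymbol{\mu}(\boldsymbol{\beta}) - 1)$ and the second is the inner product of $\mathbf{x}$ with that same vector, so stacking gives the stated identity $\boldsymbol{s(\beta|\mathbf{y})} = \mathbf{X}^T\nu(\mathbf{y}/\boldsymbol{\mu}(\boldsymbol{\beta}) - 1)$. There is no real obstacle here; the only point requiring some care is the bookkeeping through the log-link, namely recognizing that differentiating $-\nu\log\mu_i$ with respect to $\beta_j$ produces a $-\nu$ times $\partial\eta_i/\partial\beta_j$, while differentiating $-\nu y_i/\mu_i = -\nu y_i e^{-\eta_i}$ produces $+\nu y_i/\mu_i$ times $\partial\eta_i/\partial\beta_j$, so that the common factor $\nu(y_i/\mu_i - 1)$ emerges cleanly and permits the compact matrix form.
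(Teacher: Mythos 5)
Your proposal is correct and follows essentially the same route as the paper: both proofs directly differentiate the log-likelihood via the chain rule to obtain the componentwise derivatives $\nu\bigl(y_i/\mu_i(\boldsymbol{\beta})-1\bigr)$ and $\nu x_i\bigl(y_i/\mu_i(\boldsymbol{\beta})-1\bigr)$, then stack them into the matrix form $\mathbf{X}^T\nu\bigl(\mathbf{y}/\boldsymbol{\mu}(\boldsymbol{\beta})-\mathbf{1}\bigr)$; the paper differentiates through $\mu_i$ while you substitute $\mu_i=\exp(\eta_i)$ and differentiate through $\eta_i$, which is the same computation in slightly different bookkeeping.
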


\begin{proof} 
This proof is adapted from \cite{fahrmeir2013generalized}, with suitable changes accounting for the gamma regression framework considered here. First of all, the first partial derivatives of individual log-likelihoods $\log L_i(\boldsymbol{\beta}|y_i)$ are given by
    \begin{equation*}
    \begin{split}
        \frac{\partial \log L_i(\beta_0,\beta_1|y_i)}{\partial \beta_0} &= \Big( -\frac{\nu}{\mu_i(\boldsymbol{\beta})} + \frac{\nu}{\mu_{i}(\boldsymbol{\beta})^{2}} y_i \Big) \ \mu_i(\boldsymbol{\beta}) \quad
        = \Big( \frac{\nu}{\mu_i(\boldsymbol{\beta})} y_i - \nu \Big) \ \quad \\
        &= \nu\ \Big( \frac{y_i}{\mu_i(\boldsymbol{\beta})} - 1 \Big),  
        \\
        \frac{\partial \log L_i(\beta_0,\beta_1|y_i)}{\partial \beta_1} &= \Big( -\frac{\nu}{\mu_i(\boldsymbol{\beta})} + \frac{\nu}{\mu_{i}(\boldsymbol{\beta})^{2}} y_i \Big) \ \mu_i(\boldsymbol{\beta}) \ x_i 
        = \Big( \frac{\nu}{\mu_i(\boldsymbol{\beta})} y_i - \nu \Big) \ x_i \\
        &= \nu\ x_i \ \Big( \frac{y_i}{\mu_i(\boldsymbol{\beta})} - 1 \Big) \,,
    \end{split}
    \end{equation*}
Together with the definitions of the vectors $\mathbf{x}$, $\mathbf{y}$, $\boldsymbol{\mu}(\boldsymbol{\beta})$, and $\mathbf{1}$, as well as of the design matrix $\mathbf{X} = (\mathbf{1}\ \mathbf{x})$ and the definition of the score function $\boldsymbol{s(\beta|\mathbf{y})}$ there holds
    \begin{equation*}
    \begin{split}
        \mathbf{s}(\beta_0,\beta_1|\mathbf{y}) &= \begin{pmatrix} 
	    \sum_{i=1}^{n} \nu \Big( \frac{y_i}{\mu_i(\boldsymbol{\beta})} -1 \Big) 
        \\
        \sum_{i=1}^{n} \nu \ x_i \Big( \frac{y_i}{\mu_i(\boldsymbol{\beta})} -1 \Big) 
        \end{pmatrix}
        =\begin{pmatrix} 
        \mathbf{1}^T \nu \Big( \frac{\mathbf{y}}{\boldsymbol{\mu}(\boldsymbol{\beta})} - 1 \Big) 
        \\
        \mathbf{x}^T \nu \Big( \frac{\mathbf{y}}{\boldsymbol{\mu}(\boldsymbol{\beta})} - 1 \Big) 
        \end{pmatrix} 
        = \mathbf{X}^T \ \nu \ \Big( \frac{\mathbf{y}}{\boldsymbol{\mu}(\boldsymbol{\beta})} - 1 \Big) \,,
    \end{split}
    \end{equation*}
which completes the proof. 
\qed
\end{proof}

\begin{remark}
 Note that the specific form of the design matrix $\mathbf{X}$, as described here, applies only to the model under consideration. In general, the design matrix $\mathbf{X}$ comprises not only the original explanatory variables but also their higher-order powers and/or products. This expanded form allows for the estimation of higher-order effects or interaction effects between two or more covariates. \cite{fahrmeir2013generalized}
\end{remark}

By setting the score function $\boldsymbol{s(\beta|\mathbf{y})}$ to zero, a linear system of equations for $(\beta_0,\beta_1)$ arises that needs to be solved numerically. The numerical algorithm used in this work (cf. Section~\ref{methods}) involves the computation of the observed information matrix $\boldsymbol{H(\beta|\mathbf{y})}$ (= Hessian matrix) or expected information matrix $\boldsymbol{F(\beta|\mathbf{y})}$ (= Fisher matrix), which is a key component of the algorithm. Note that setting the score function $\boldsymbol{s(\beta|\mathbf{y})}$ to zero is independent of $\nu$, meaning that the process of finding solutions for $(\beta_0, \beta_1)$ is not influenced by the value of $\nu$. While solving the equation $\boldsymbol{s(\beta|\mathbf{y})} = 0$ does not inherently guarantee a maximum, the concave nature of the log-likelihood function $\ell(\boldsymbol{\beta}| \mathbf{y})$ in this model ensures its maximization \cite{wedderburn1976existence}. This concavity can be confirmed by verifying the positive semi-definiteness of $\boldsymbol{H(\beta|\mathbf{y})}$, defined in \ref{H}. For further insights into the existence and uniqueness of the maximum likelihood estimator in generalized linear models, refer to \cite{wedderburn1976existence}.

In general, there is no guarantee that solving the equation function $\boldsymbol{s(\beta|\mathbf{y})} = 0$ yields a maximum. However, since the log-likelihood $\ell(\boldsymbol{\beta}| \mathbf{y})$ for this model is a concave function, solving the equation will yield a maximum. The concativity for $\ell(\boldsymbol{\beta}| \mathbf{y})$, defined in \ref{H}, can be verified by checking if $\boldsymbol{H(\beta|\mathbf{y})}$ is positive definite. 

\begin{definition}
The observed information matrix $\boldsymbol{H(\beta|\mathbf{y})}$ is defined as the Hessian matrix of the log-likelihood function $\ell(\boldsymbol{\beta|\mathbf{y}})$, i.e., the matrix of second derivatives with respect to the model parameters $\boldsymbol{\beta}$,
    \begin{equation}\label{H}
        \boldsymbol{H(\beta|\mathbf{y})} := - \frac{\partial^2 \ell(\beta_0,\beta_1|\mathbf{y})}{\partial \boldsymbol{\beta} \ \partial \boldsymbol{\beta}^T} \,.       
    \end{equation}
The expected information matrix $\boldsymbol{F(\beta|\mathbf{y})}$ is defined as 
    \begin{equation}\label{F}
        \boldsymbol{F(\beta|\mathbf{y})} := \mathbb{E} \Big[- \frac{\partial^2 \ell(\beta_0,\beta_1|\mathbf{y})}{\partial \boldsymbol{\beta} \ \partial \boldsymbol{\beta}^T}\Big] \,,  
    \end{equation}
where $\mathbb{E}[\cdot]$ denotes the expected value.
\end{definition}

The matrices $\boldsymbol{H(\beta|\mathbf{y})}$ and $\boldsymbol{F(\beta|\mathbf{y})}$ quantify the amount of information that the observed data provides about the unknown parameters of the model. For our specific gamma regression framework, they can be computed explicitly as described in the following

\begin{proposition}
Let $\mathbf{y}$ be as in Proposition~\ref{prop1}, let $\mathbf{W}$ = $diag( \RED{\nu ~ y_i} / \mu_i(\boldsymbol{\beta}) )_{i=1,\dots,n}$ be a diagonal matrix with elements $\nu \ y_i/\mu_i(\boldsymbol{\beta})$ and $\mathbf{\Tilde{W}}$ = $diag(\nu)$ be a diagonal matrix with elements ${\nu}$. Then the observed information matrix $\boldsymbol{H(\beta|\mathbf{y})}$ and the expected information matrix $\boldsymbol{F(\beta|\mathbf{y})}$, defined in \eqref{H} and \eqref{F}, respectively, can be expressed as
    \begin{equation*}
        \boldsymbol{H(\beta|\mathbf{y})} = \mathbf{X}^T\mathbf{W}\mathbf{X} \,,
        \qquad \text{and} \qquad
        \boldsymbol{F(\beta|\mathbf{y})} = \mathbf{X}^T\mathbf{\Tilde{W}}\mathbf{X} \,.
    \end{equation*}
\end{proposition}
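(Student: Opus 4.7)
The plan is to compute the second partial derivatives of the individual log-likelihoods $\log L_i(\boldsymbol{\beta}|y_i)$ by differentiating the score components already obtained in the proof of Proposition~\ref{prop1}, then assemble the resulting $2\times 2$ matrix of sums and recognize it as $\mathbf{X}^T \mathbf{W} \mathbf{X}$. The central ingredient is the chain rule applied to $\mu_i(\boldsymbol{\beta}) = \exp(\beta_0 + \beta_1 x_i)$, which gives $\partial \mu_i/\partial \beta_0 = \mu_i(\boldsymbol{\beta})$ and $\partial \mu_i/\partial \beta_1 = x_i \mu_i(\boldsymbol{\beta})$. Since $\partial(1/\mu_i)/\partial \beta_j = -(1/\mu_i^2)\, \partial \mu_i/\partial \beta_j$, the constant term $-\nu$ appearing in each score component vanishes upon differentiation and one is left with clean expressions of the form $-\nu y_i/\mu_i(\boldsymbol{\beta})$, $-\nu x_i y_i/\mu_i(\boldsymbol{\beta})$, and $-\nu x_i^2 y_i/\mu_i(\boldsymbol{\beta})$ for the entries $(0,0)$, $(0,1)=(1,0)$, and $(1,1)$ of the Hessian of $\log L_i$, respectively.

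Next, I would sum these contributions over $i=1,\dots,n$ and negate, following \eqref{H}, to obtain the entries of $\boldsymbol{H(\beta|\mathbf{y})}$, namely $\sum_i \nu y_i/\mu_i(\boldsymbol{\beta})$, $\sum_i \nu x_i y_i/\mu_i(\boldsymbol{\beta})$, and $\sum_i \nu x_i^2 y_i/\mu_i(\boldsymbol{\beta})$. Comparing this with the block expansion
\begin{equation*}
    \mathbf{X}^T \mathbf{W} \mathbf{X}
    = \begin{pmatrix} \mathbf{1}^T \mathbf{W} \mathbf{1} & \mathbf{1}^T \mathbf{W} \mathbf{x} \\ \mathbf{x}^T \mathbf{W} \mathbf{1} & \mathbf{x}^T \mathbf{W} \mathbf{x} \end{pmatrix},
\end{equation*}
with $\mathbf{W} = \mathrm{diag}(\nu y_i/\mu_i(\boldsymbol{\beta}))$, identifies $\boldsymbol{H(\beta|\mathbf{y})} = \mathbf{X}^T \mathbf{W} \mathbf{X}$ entry by entry.

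Finally, for the expected information $\boldsymbol{F(\beta|\mathbf{y})}$, I would take expectations entry-wise in the Hessian expression. The only random quantities appearing are the factors $y_i$, and by the model assumption $Y_i \sim G(\mu_i(\boldsymbol{\beta}))$ one has $\mathbb{E}(Y_i) = \mu_i(\boldsymbol{\beta})$, so that $\mathbb{E}[\nu y_i/\mu_i(\boldsymbol{\beta})] = \nu$. Each diagonal weight $\nu y_i/\mu_i(\boldsymbol{\beta})$ of $\mathbf{W}$ therefore collapses to $\nu$ in expectation, yielding $\mathbf{\tilde W} = \mathrm{diag}(\nu)$ and hence $\boldsymbol{F(\beta|\mathbf{y})} = \mathbf{X}^T \mathbf{\tilde W} \mathbf{X}$ via the same block structure.

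There is no real conceptual obstacle here; the proof is a routine differentiation followed by a pattern-matching identification. The main source of potential error is bookkeeping: keeping track of the minus sign in the definition \eqref{H}, correctly applying the chain rule through $\mu_i(\boldsymbol{\beta})$, and ensuring that the intercept column $\mathbf{1}$ is handled symmetrically with $\mathbf{x}$ so that all four blocks of the $\mathbf{X}^T \mathbf{W} \mathbf{X}$ product line up with the summed Hessian entries.
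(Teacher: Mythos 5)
Your proposal is correct and follows essentially the same route as the paper's proof: computing the second partial derivatives of the individual log-likelihoods (you obtain them by differentiating the score components from Proposition~\ref{prop1}, the paper states them directly), summing and negating per \eqref{H} to recognize the block form $\mathbf{X}^T\mathbf{W}\mathbf{X}$, and then using $\mathbb{E}(y_i) = \mu_i(\boldsymbol{\beta})$ to collapse the weights to $\nu$ for $\boldsymbol{F(\beta|\mathbf{y})} = \mathbf{X}^T\mathbf{\Tilde{W}}\mathbf{X}$.
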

\begin{proof}
The second partial derivatives of individual log-likelihoods $\log L_i(\boldsymbol{\beta|\mathbf{y}})$ are given by
    \begin{equation}\label{decondP}
    \begin{split}
        \frac{\partial^2 \log L_i(\beta_0,\beta_1|\mathbf{y})}{\partial \beta_0^2} 
        &= -\frac{\nu\ y_i}{\mu_i(\boldsymbol{\beta})} \,, \quad \quad
        \frac{\partial^2 \log L_i(\beta_0,\beta_1|\mathbf{y})}{\partial \beta_1^2} = -\frac{\nu\ x_i^2 \ y_i}{\mu_i(\boldsymbol{\beta})} \,, 
        \\
        \frac{\partial^2 \log L_i(\beta_0,\beta_1|\mathbf{y})}{\partial \beta_0 \ \beta_1} &= -\frac{\nu\ x_i \ y_i}{\mu_i(\boldsymbol{\beta})} \,.
    \end{split}
    \end{equation}
The observed information matrix $\boldsymbol{H(\beta|\mathbf{y})}$ is obtained through the aggregation of the second partial derivatives of individual log-likelihoods $\log L_i(\boldsymbol{\beta|\mathbf{y}})$, i.e.,
    \begin{equation*}
    \begin{split}
        \boldsymbol{H(\beta|\mathbf{y})} &= - \frac{\partial^2 \ell(\beta_0,\beta_1|\mathbf{y})}{\partial \boldsymbol{\beta} \ \partial \boldsymbol{\beta}^T} \overset{\eqref{lik1}}{=} - \sum_{i=1}^{n}  \frac{\partial^2 \log L_i(\beta_0,\beta_1|\mathbf{y})}{\partial \boldsymbol{\beta} \ \partial \boldsymbol{\beta}^T} 
        \overset{\eqref{decondP}}{=} \begin{pmatrix} 
            \sum_{i=1}^{n} \frac{\nu \ y_i}{\mu_i(\boldsymbol{\beta})} & \sum_{i=1}^{n} \frac{\nu \ x_i \ y_i}{\mu_i(\boldsymbol{\beta})} \\
            \sum_{i=1}^{n} \frac{\nu \ x_i \ y_i}{\mu_i(\boldsymbol{\beta})} &  \sum_{i=1}^{n} \frac{\nu \ x_i^2 \ y_i}{\mu_i(\boldsymbol{\beta})}
        \end{pmatrix} \,.
    \end{split}
    \end{equation*}
Together with the definition of $\mathbf{W}$ we thus obtain
    \begin{equation*}
        \boldsymbol{H(\beta|\mathbf{y})} = \mathbf{X}^T\mathbf{W}\mathbf{X} \,.
    \end{equation*}
Since $\mathbb{E}(y_i) = \mu_i(\boldsymbol{\beta})$, the Fisher matrix $\boldsymbol{F(\beta|\mathbf{y})}$ is given by
    \begin{equation*}
    \begin{split}
        \boldsymbol{F(\beta|\mathbf{y})} = \mathbb{E} \Big[- \frac{\partial^2 \ell(\beta_0,\beta_1|\mathbf{y})}{\partial \boldsymbol{\beta} \ \partial \boldsymbol{\beta}^T}\Big] &= \begin{pmatrix} 
        \sum_{i=1}^{n} \frac{\nu \ \mathbb{E}(y_i)}{\mu_i(\boldsymbol{\beta})} & \sum_{i=1}^{n} \frac{\nu \ x_i \ \mathbb{E}(y_i)}{\mu_i(\boldsymbol{\beta})} \\ \sum_{i=1}^{n} \frac{\nu \ x_i \ \mathbb{E}(y_i)}{\mu_i(\boldsymbol{\beta})} &  \sum_{i=1}^{n} \frac{\nu \ x_i^2 \ \mathbb{E}(y_i)}{\mu_i(\boldsymbol{\beta})}
        \end{pmatrix} 
        \\ 
        &= \begin{pmatrix} 
        \sum_{i=1}^{n} \frac{\nu \ \mu_i(\boldsymbol{\beta})}{\mu_i(\boldsymbol{\beta})} & \sum_{i=1}^{n} \frac{\nu \ x_i \ \mu_i(\boldsymbol{\beta})}{\mu_i(\boldsymbol{\beta})} \\
        \sum_{i=1}^{n} \frac{\nu \ x_i \ \mu_i(\boldsymbol{\beta})}{\mu_i(\boldsymbol{\beta})} &  \sum_{i=1}^{n} \frac{\nu \ x_i^2 \ \mu_i(\boldsymbol{\beta})}{\mu_i(\boldsymbol{\beta})}
        \end{pmatrix} =
        \mathbf{X}^T\mathbf{\Tilde{W}}\mathbf{X} \,,
    \end{split}
    \end{equation*}
which yields the assertion.
\qed
\end{proof}

% Subsubsection - Numerical Computation of the MLE
\subsubsection{Numerical Computation of the Maximum Likelihood Estimator} \label{methods}

Numerical algorithms are essential for estimating the maximum likelihood estimator of parameters in a statistical model, particularly when an analytical solution to the likelihood equations is unattainable \cite{hardin2007generalized}. Frequently, the likelihood function is an intricate, nonlinear function of parameters, lacking a closed-form expression for its maximum, e.g.,  in gamma regression with a logarithmic link function.

In such cases, numerical algorithms such as the Newton-Raphson algorithm are employed to iteratively approximate the solution of the likelihood equations until convergence is reached \cite{fahrmeir2013generalized}. These methods rely on numerical techniques to estimate the derivatives of the likelihood function, which are used in computing the updates to the parameter estimates.

\begin{description}
\item[Newton-Raphson Method \cite{hardin2007generalized}] is an iterative method used to find a value of $\boldsymbol{\beta}$ that satisfies the equation $\mathbf{s}(\boldsymbol{\beta|\mathbf{y}}) = 0$, which corresponds to the point where the log-likelihood function is maximized. The Newton-Raphson algorithm achieves this by iteratively approximating the solution of $\mathbf{s}(\boldsymbol{\beta|\mathbf{y}}) = 0$ using Taylor series expansion of $\mathbf{s}(\boldsymbol{\beta|\mathbf{y}})$ around the current estimate of $\boldsymbol{\beta}$. Specifically, the expansion can be written as:
    \begin{equation}\label{tangent}
        \mathbf{s}(\boldsymbol{\beta|\mathbf{y}}) \approx \mathbf{s}(\boldsymbol{\beta}^{(k)}|\mathbf{y}) - \boldsymbol{H(\beta}^{(k)}|\mathbf{y}) (\boldsymbol{\beta} - \boldsymbol{\beta}^{(k)}) \,,
    \end{equation}
where $\boldsymbol{\beta}^{(k)}$ is the estimate of $\boldsymbol{\beta}$ at the $k$-th iteration, $\mathbf{s}(\boldsymbol{\beta}^{(k)}|\mathbf{y})$ is the score function evaluated at $\boldsymbol{\beta}^{(k)}$, and $\boldsymbol{H(\beta}^{(k)}|\mathbf{y}) = - \partial \ \mathbf{s}(\boldsymbol{\beta}^{(k)}|\mathbf{y}) / \partial \ \boldsymbol{\beta}^T $ is the observed information matrix evaluated at $\boldsymbol{\beta}^{(k)}$. The score function is approximated using a linear tangent line, resulting in an improved approximate solution. This involves finding the root of the tangent line in \eqref{tangent}. Thus, the algorithm approximates the maximum likelihood estimator of $\boldsymbol{\beta}$ by solving the equation:
    \begin{equation*} 
        \mathbf{s}(\boldsymbol{\beta}^{(k)}|\mathbf{y}) - \boldsymbol{H(\beta}^{(k)}|\mathbf{y}) (\boldsymbol{\beta} - \boldsymbol{\beta}^{(k)}) = 0 \,,
    \end{equation*}
for $\boldsymbol{\beta}$, which yields:
    \begin{equation}\label{update}
        \boldsymbol{\beta}^{(k+1)} = \boldsymbol{\beta}^{(k)} + \boldsymbol{H(\beta}^{(k)}|\mathbf{y})^{\dagger} \ \mathbf{s}(\boldsymbol{\beta}^{(k)}|\mathbf{y}) \,.
    \end{equation}
    The algorithm iterates until convergence is achieved, which is typically defined as the point at which the change in the estimate of $\boldsymbol{\beta}$ between two successive iterations falls below a certain threshold.\\

\item[Fisher Scoring Method \cite{fahrmeir2013generalized}] is a useful approach for maximum likelihood estimation that involves replacing the observed information matrix $\boldsymbol{H(\beta}^{(k)}|\mathbf{y})$ by the expected information matrix $\boldsymbol{F(\beta}^{(k)}|\mathbf{y})$ in the update formula \eqref{update}, i.e., 
    \begin{equation}\label{updateF}
        \boldsymbol{\beta}^{(k+1)} = \boldsymbol{\beta}^{(k)} + \boldsymbol{F(\beta}^{(k)}|\mathbf{y})^{\dagger} \ \mathbf{s}(\boldsymbol{\beta}^{(k)}|\mathbf{y}) \,.
    \end{equation}
    This simplifies the required computations, making it faster and more efficient. 
\end{description}

% Subsubsection - Asympototic Properties of the MLE
\subsubsection{Asymptotic Properties of the Maximum Likelihood Estimator (\acrshort{mle})}

Given the gamma regression model with logarithmic link function, as defined in \eqref{model}, and the \acrshort{mle} procedure presented in the previous section, we now investigate the asymptotic properties of the \acrshort{mle} of the regression coefficients $\boldsymbol{\beta} = (\beta_0, \dots, \beta_k)^T$. Specifically, under mild regularity conditions introduced below, the \acrshort{mle} can be proven to be a consistent and asymptotically normal estimator, with its asymptotic covariance matrix being equivalent to the inverse of the Fisher information matrix \cite{fahrmeir1985consistency}.

\begin{assumption}[\cite{fahrmeir1985consistency} Regularity Assumptions]\label{reg}
Let $\hat{\boldsymbol{\beta}} \in B \subset \mathbb{R}^p$ denote the \acrshort{ml} estimator for the true parameter $\boldsymbol{\beta}$, $p$ be the number of predictor variables in the model, and $M$ the image $\boldsymbol{\mu(\beta)}$ of $\boldsymbol{\beta}$. Furthermore, the linear combination of the predictor variables $\boldsymbol{\eta}$ is related to the mean $\boldsymbol{\mu(\beta)}$ of the response $y$ by an injective link function $g: M \rightarrow \mathbb{R}^p$, i.e., $\boldsymbol{\eta} = g(\boldsymbol{\mu(\beta)})$ (compare with \eqref{model}). Additionally, there holds
\begin{itemize}
    \item[(i)] $B$ is open in $\mathbb{R}^p$,
    \item[(ii)] The design matrix $\mathbf{X}$ has full rank, i.e., $rank(\mathbf{X})=p$, 
    \item[(iii)] $g(\cdot)$ is twice continuously differentiable on M.
\end{itemize}
\end{assumption}

Note that Assumption~\ref{reg} is valid for our gamma regression model with a logarithmic link function \eqref{loglink}, i.e., where the response variable follows a gamma distribution \eqref{model}.

\begin{definition}
An estimator $\hat{\boldsymbol{\beta}}$ is consistent for the true parameter vector $\boldsymbol{\beta}$ if, as the sample size $n$ goes to infinity, $\hat{\boldsymbol{\beta}}$ converges in probability to $\boldsymbol{\beta}$. In other words, for any small positive number $\epsilon$, it holds that 
    \begin{equation*}
        \lim_{n \to \infty} P(||\hat{\boldsymbol{\beta}} - \boldsymbol{\beta}|| > \epsilon) = 0 \,.
    \end{equation*}
\end{definition}

Using the Law of Large Numbers, the sample mean of a sequence of i.i.d.\ random variables with finite mean converges in probability to the expected value. Since the log-likelihood function $\ell(\boldsymbol{\beta}| \mathbf{y})$ in this model \eqref{model} is the sum of i.i.d.\ Gamma distributions, the Law of Large Numbers can be used to establish convergence in probability of the \acrshort{mle} to the true parameter values. In the following proposition, two key properties of the gamma regression model are established without providing formal proof.

\begin{proposition}[\cite{fahrmeir1985consistency}] \label{consistency} \hfill
\begin{enumerate}[label=(\roman*)]
         \item In the setting of the gamma regression model \eqref{model}, the \acrshort{mle} $\hat{\boldsymbol{\beta}}$ is consistent for $\boldsymbol{\beta}$. In particular, under the regularity conditions stated in Assumption~\ref{reg}, the \acrshort{ml} estimator $\hat{\boldsymbol{\beta}}$ converges in probability to the true regression coefficients $\boldsymbol{\beta}$ for increasing sample size, i.e., $\hat{\boldsymbol{\beta}} \overset{p}{\to} \boldsymbol{\beta}$, where $\overset{p}{\to}$ denotes convergence in probability.
        \item Let the assumptions of Proposition~\ref{consistency} hold. Then the gamma regression model defined in \eqref{model} is asymptotically normal in relation to the maximum likelihood estimator (\acrshort{mle}) $\hat{\boldsymbol{\beta}}$, i.e., $\sqrt{n}(\hat{\boldsymbol{\beta}} - \boldsymbol{\beta}) \overset{d}{\to} \mathcal{N}(\boldsymbol{0}, \boldsymbol{F}^{\dagger}(\boldsymbol{\beta|\mathbf{y}}))$, where $\overset{d}{\to}$ denotes convergence in distribution, and $n$ denotes the sample size.
\end{enumerate}
\end{proposition}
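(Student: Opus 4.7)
The plan is to invoke the classical M-estimator asymptotic theory, specialized to the gamma regression setting of \eqref{model}, using the explicit formulas for the score and information matrices established in Proposition~\ref{prop1} and the subsequent proposition. The two parts are handled by separate but related arguments: consistency from a uniform law of large numbers combined with identifiability, and asymptotic normality from a score Taylor expansion combined with a central limit theorem.

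For part~(i), I would work with the normalized log-likelihood $n^{-1}\ell(\boldsymbol{\beta}|\mathbf{y})$. Since the individual contributions are concave in $\boldsymbol{\beta}$ (as noted earlier via Wedderburn), a uniform law of large numbers for sums of independent summands yields convergence in probability to a deterministic contrast function $\ell_\infty(\boldsymbol{\beta})$. Assumption~\ref{reg}(ii), i.e.\ full column rank of $\mathbf{X}$, together with the strict log-concavity of the gamma density in the canonical parameter, guarantees that $\ell_\infty$ has a unique maximum at the true $\boldsymbol{\beta}$. The argmax theorem then produces $\hat{\boldsymbol{\beta}}\overset{p}{\to}\boldsymbol{\beta}$, with the openness of $B$ from Assumption~\ref{reg}(i) ensuring that the maximum is interior.

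For part~(ii), I would Taylor-expand the score equation $\mathbf{s}(\hat{\boldsymbol{\beta}}|\mathbf{y})=\mathbf{0}$ about the true $\boldsymbol{\beta}$, obtaining an intermediate value $\tilde{\boldsymbol{\beta}}$ between $\hat{\boldsymbol{\beta}}$ and $\boldsymbol{\beta}$ such that, after rescaling,
\begin{equation*}
\sqrt{n}(\hat{\boldsymbol{\beta}}-\boldsymbol{\beta}) = \bigl(n^{-1}\boldsymbol{H}(\tilde{\boldsymbol{\beta}}|\mathbf{y})\bigr)^{\dagger}\, n^{-1/2}\,\mathbf{s}(\boldsymbol{\beta}|\mathbf{y}).
\end{equation*}
By Proposition~\ref{prop1}, the rescaled score $n^{-1/2}\mathbf{s}(\boldsymbol{\beta}|\mathbf{y}) = n^{-1/2}\mathbf{X}^T\,\nu(\mathbf{y}/\boldsymbol{\mu}(\boldsymbol{\beta})-\mathbf{1})$ is a sum of independent, mean-zero terms with covariance matrix $n^{-1}\boldsymbol{F}(\boldsymbol{\beta}|\mathbf{y})$, so a Lindeberg-Feller central limit theorem gives asymptotic normality with exactly the claimed covariance. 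The consistency obtained in part~(i), the twice-differentiability of $g$ in Assumption~\ref{reg}(iii), and a law of large numbers applied to the Hessian yield $n^{-1}\boldsymbol{H}(\tilde{\boldsymbol{\beta}}|\mathbf{y})\overset{p}{\to}\lim_{n\to\infty} n^{-1}\boldsymbol{F}(\boldsymbol{\beta}|\mathbf{y})$, and Slutsky's theorem delivers the stated conclusion.

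The main obstacle will be the non-identically-distributed nature of the summands: the covariates $x_i$ produce observation-specific means $\mu_i(\boldsymbol{\beta})$, so the classical i.i.d.\ laws cannot be invoked directly as the paper's informal remark about the Law of Large Numbers somewhat optimistically suggests. A stabilization condition on the design sequence, typically that $n^{-1}\mathbf{X}^T\mathbf{X}$ converges to a positive definite limit and that a Lindeberg-type moment condition holds uniformly in $i$, is required to make both the uniform law of large numbers and the Lindeberg-Feller CLT applicable. These hypotheses are standard in GLM asymptotics and are implicit in Fahrmeir's framework; for the gamma-log-link specification they essentially reduce to checking that $\mu_i(\boldsymbol{\beta})$ stays bounded away from zero and infinity, which follows from the boundedness of the design together with the smoothness of the exponential link.
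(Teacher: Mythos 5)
The paper never proves this proposition: it is stated, in the paper's own words, ``without providing formal proof'' and is imported wholesale from \cite{fahrmeir1985consistency}, so there is no internal argument to compare yours against line by line; the relevant comparison is with that cited work of Fahrmeir and Kaufmann. Your sketch is the classical textbook route --- consistency from a law of large numbers for $n^{-1}\ell(\boldsymbol{\beta}|\mathbf{y})$ plus concavity and an argmax argument, asymptotic normality from the first-order expansion of the score equation, a Lindeberg--Feller CLT for $n^{-1/2}\mathbf{s}(\boldsymbol{\beta}|\mathbf{y})$ (whose summand covariances indeed aggregate to $n^{-1}\boldsymbol{F}(\boldsymbol{\beta}|\mathbf{y})$, consistent with the paper's explicit computation of $\boldsymbol{F} = \mathbf{X}^T\tilde{\mathbf{W}}\mathbf{X}$), and Slutsky --- and it is sound at the level of a sketch. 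It differs genuinely from the cited source in its hypotheses: Fahrmeir and Kaufmann treat fixed, possibly non-stabilizing designs and prove consistency and asymptotic normality under divergence of the smallest eigenvalue of the Fisher information plus a normalization condition, using normed information matrices and martingale-type limit theorems, whereas you impose the stronger but far simpler stabilization condition that $n^{-1}\mathbf{X}^T\mathbf{X}$ converges to a positive definite limit, which buys you access to classical independent-but-not-identically-distributed LLN/CLT machinery. You are also right --- and it is worth stating plainly --- that the paper's motivating remark that the log-likelihood ``is the sum of i.i.d.\ Gamma distributions'' is incorrect as written, since the terms are independent but not identically distributed ($\mu_i$ varies with $x_i$); some design condition of the kind you name genuinely must be added, and it is not contained in Assumption~\ref{reg}. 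Three small points to tidy in your write-up: the mean-value expansion of the vector-valued score at a single intermediate $\tilde{\boldsymbol{\beta}}$ is not literally valid (apply the mean value theorem row-wise, or use the integral form of the remainder); given concavity, pointwise convergence of $n^{-1}\ell$ already yields argmax consistency via the convexity lemma, so the uniform LLN you invoke is more than you need; and the covariance $\boldsymbol{F}^{\dagger}(\boldsymbol{\beta}|\mathbf{y})$ in the statement must be read in your normalized sense, i.e.\ as the inverse of $\lim_{n\to\infty} n^{-1}\boldsymbol{F}$, since $\boldsymbol{F}$ itself grows with $n$.
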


% Subsubsection - Linear Hypothesis Testing
\subsubsection{Linear Hypothesis Testing}
\label{testing}

By conducting hypothesis tests on the estimated regression coefficients $\hat{\boldsymbol{\beta}}$, one can provide evidence-based justifications for the inclusion or exclusion of specific predictors, ensure the robustness and reliability of a model, and enhance the interpretability and generalizability of the findings. Testing a linear hypothesis on the coefficients of the Gamma \acrshort{glm} can be represented as follows:
    \begin{equation*}
        H_0: \boldsymbol{C}\boldsymbol{\beta} = \boldsymbol{d} \,,
    \end{equation*}
where $\boldsymbol{C}$ is a known $r\times p$ matrix of rank $r$, $\boldsymbol{\beta}$ is the $p\times 1$ vector of regression coefficients, and $\boldsymbol{d}$ is the $r\times 1$ vector of known constants. This matrix $\boldsymbol{C}$ is used to define the specific hypothesis being tested, and its structure depends on the research question at hand. In the context of our study, $\boldsymbol{C}$ is constructed to examine the significance of certain predictors in relation to the response variable.

Under hypothesis $H_0$, the unrestricted maximum likelihood estimator $\hat{\boldsymbol{\beta}}$ is not efficient, and therefore we need to consider restricted estimators that take into account the constraints imposed by $H_0$ \cite{fahrmeir2013generalized}. For this, we consider the Wald statistic $w$ given in

\begin{definition} \label{defWald}
The Wald statistic $w$ is defined as:
    \begin{equation}\label{wald}
        w = (\boldsymbol{C}\hat{\boldsymbol{\beta}}-\boldsymbol{d})^T\left[\boldsymbol{C} \underbrace{(\boldsymbol{X}^T\boldsymbol{\Tilde{W}}\boldsymbol{X})
        \ \hspace{-0.2cm} ^{\dagger}}_{\boldsymbol{F^\dagger(\hat{\beta}|\mathbf{y})}}   \boldsymbol{C}^T\right]^{-1}(\boldsymbol{C}\hat{\boldsymbol{\beta}}-\boldsymbol{d}) \,,
    \end{equation}
where $\boldsymbol{X}$ is the $n\times p$ design matrix, and $\boldsymbol{W}$ is the $n\times n$ diagonal matrix with the weights $w_i$ on the diagonal. 
\end{definition}

Under $H_0$, the Wald statistic has an asymptotic $\chi^2$-distribution with $r$ degrees of freedom \cite{fahrmeir2013generalized}, i.e.,
    \begin{equation*}
        w \stackrel{d}{\rightarrow} \chi_r^2 \quad \text{as } n \rightarrow \infty \,.
    \end{equation*}
We reject $H_0$ at level $\alpha$ if $w > \chi_{r,1-\alpha}^2$, where $\chi_{r,1-\alpha}^2$ is the $1-\alpha$ quantile of the $\chi^2$-distribution with $r$ degrees of freedom.

In the specific case of predictive modelling in \acrshort{hvof} coating, hypothesis testing plays a vital role in determining the relevance of regression coefficients $\boldsymbol{\beta}_j$, where $\boldsymbol{\beta}_j$ denotes a subvector of $\boldsymbol{\beta}$. Specifically, we consider the case where the null hypothesis $H_0: \boldsymbol{\beta}_j = 0$ versus the alternative hypothesis $H_1: \boldsymbol{\beta}_j \neq 0$.

\begin{proposition} \label{prop3.5}
Let $\boldsymbol{\beta}_j$ be a subvector of $\boldsymbol{\beta}$ with dimension $r$, $\boldsymbol{d}=\boldsymbol{0}$, and $\boldsymbol{C}$ be a $r \times p$ matrix with 1 at the entries corresponding to the elements of $\boldsymbol{\beta}_j$ and 0 elsewhere. With this choice the Wald statistic $w$, defined in \eqref{wald}, takes the form 
    \begin{equation}\label{regrrel}
        w = \boldsymbol{\hat{\beta}}_j^T \mathbf{A}_j^{-1}\boldsymbol{\hat{\beta}}_j \,,
    \end{equation}
where $ \mathbf{A}_j$ is the submatrix of the asymptotic covariance matrix $\mathbf{A}=(\boldsymbol{X}^T\boldsymbol{\Tilde{W}}\boldsymbol{X})
        \ \hspace{-0.2cm} ^{\dagger}$ corresponding to the elements of $\boldsymbol{\beta}_j$. 
\end{proposition}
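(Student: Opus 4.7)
The plan is to substitute the special choices of $\boldsymbol{C}$ and $\boldsymbol{d}$ into the general Wald statistic \eqref{wald} and then exploit the fact that $\boldsymbol{C}$ acts as a selection (or projection) operator on both the vector $\hat{\boldsymbol{\beta}}$ and the matrix $\mathbf{A}$. Since $\boldsymbol{d}=\boldsymbol{0}$, the formula immediately reduces to
\begin{equation*}
w = (\boldsymbol{C}\hat{\boldsymbol{\beta}})^T \bigl[\boldsymbol{C}\,\mathbf{A}\,\boldsymbol{C}^T\bigr]^{-1} (\boldsymbol{C}\hat{\boldsymbol{\beta}}),
\end{equation*}
so the entire task boils down to identifying $\boldsymbol{C}\hat{\boldsymbol{\beta}}$ with $\hat{\boldsymbol{\beta}}_j$ and $\boldsymbol{C}\mathbf{A}\boldsymbol{C}^T$ with $\mathbf{A}_j$.

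First, I would make the indexing explicit: let $\mathcal{J}=\{j_1,\dots,j_r\}\subset\{1,\dots,p\}$ denote the coordinates of $\boldsymbol{\beta}$ corresponding to the subvector $\boldsymbol{\beta}_j$, and let $\boldsymbol{C}$ be the $r\times p$ matrix whose $(l,m)$-entry is $1$ if $m=j_l$ and $0$ otherwise. A direct computation then yields
\begin{equation*}
(\boldsymbol{C}\hat{\boldsymbol{\beta}})_l = \sum_{m=1}^p C_{lm}\,\hat{\beta}_m = \hat{\beta}_{j_l},
\end{equation*}
so that $\boldsymbol{C}\hat{\boldsymbol{\beta}} = \hat{\boldsymbol{\beta}}_j$, giving the outer factors in the quadratic form.

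Second, I would evaluate the middle matrix entrywise using the same selection property. By definition,
\begin{equation*}
\bigl(\boldsymbol{C}\,\mathbf{A}\,\boldsymbol{C}^T\bigr)_{l,m} = \sum_{k=1}^p \sum_{i=1}^p C_{lk}\,A_{ki}\,C_{mi} = A_{j_l,\,j_m},
\end{equation*}
which is precisely the $(l,m)$-entry of the principal submatrix $\mathbf{A}_j$ of $\mathbf{A}=(\boldsymbol{X}^T\boldsymbol{\tilde{W}}\boldsymbol{X})^{\dagger}$ indexed by $\mathcal{J}$. Substituting both identifications back into the reduced Wald expression yields $w = \hat{\boldsymbol{\beta}}_j^T \mathbf{A}_j^{-1} \hat{\boldsymbol{\beta}}_j$, as claimed.

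There is no substantial analytic obstacle here; the result is really a notational/linear-algebraic observation about selection matrices. The one point requiring mild care is ensuring that the pseudoinverse notation in $\mathbf{A}$ does not cause trouble when taking the inverse of the reduced $r\times r$ block: since $\boldsymbol{C}$ has rank $r$ and $\boldsymbol{X}^T\boldsymbol{\tilde{W}}\boldsymbol{X}$ has full rank $p$ by Assumption~\ref{reg}(ii) together with the positivity of the diagonal weights in $\boldsymbol{\tilde{W}}$, the pseudoinverse coincides with the ordinary inverse and the submatrix $\mathbf{A}_j$ is itself invertible, so the final expression $\mathbf{A}_j^{-1}$ is well defined.
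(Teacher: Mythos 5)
Your proposal is correct and follows essentially the same route as the paper's own proof: substitute $\boldsymbol{d}=\boldsymbol{0}$ and the selection matrix $\boldsymbol{C}$ into \eqref{wald}, identify $\boldsymbol{C}\hat{\boldsymbol{\beta}}$ with $\hat{\boldsymbol{\beta}}_j$ and $\boldsymbol{C}\mathbf{A}\boldsymbol{C}^T$ with the principal submatrix $\mathbf{A}_j$. The only differences are that you carry out the entrywise verification for an arbitrary index set (the paper writes $\boldsymbol{C}$ for the case where $\boldsymbol{\beta}_j$ consists of the first $r$ coefficients and asserts the final equality in one step) and you add the well-definedness observation that $\mathbf{A}_j$ is invertible because $\boldsymbol{X}^T\boldsymbol{\tilde{W}}\boldsymbol{X}$ is positive definite — a point the paper leaves implicit.
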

\begin{proof}
    Assuming that the prerequisites for $\boldsymbol{\beta}_j$ and $\boldsymbol{d}$, as specified in Proposition \ref{prop3.5}, are met, and with the matrix $\boldsymbol{C}$ taking on the following form:
    \begin{equation*}
     \text{r}~\left\{\begin{array}{@{}c@{}}\null\\\null\\\null\\\null\end{array}\right.
        \begin{pmatrix}
            0 & 1 & 0 & \hdots & 0 & 0 & \hdots & 0 \\
            0 & 0 & 1 & \hdots & 0 & 0 & \hdots & 0 \\
            \vdots &  & & \ddots & 0 & 0 & \hdots & 0 \\
             \undermatt{p}{0 & \undermat{r}{ 0 & 0 & \hdots & 1} & 0 & \hdots & 0} \\
        \end{pmatrix}
    \end{equation*}\vspace{0.65cm}\\
    Here, $\boldsymbol{\beta}_j$ represents the initial $r$ regression coefficients, given by $(\beta_1, \dots, \beta_r)^T$. Together with the definition \ref{defWald} there holds
    \begin{equation*}
    w \overset{\eqref{wald}}{=} (\boldsymbol{C}\hat{\boldsymbol{\beta}}-\boldsymbol{d})^T\left[\boldsymbol{C} \underbrace{(\boldsymbol{X}^T\boldsymbol{\Tilde{W}}\boldsymbol{X})
        \ \hspace{-0.2cm} ^{\dagger}}_{\boldsymbol{F^\dagger(\hat{\beta}|\mathbf{y})}}   \boldsymbol{C}^T\right]^{-1}(\boldsymbol{C}\hat{\boldsymbol{\beta}}-\boldsymbol{d}) \, = \boldsymbol{\hat{\beta}}_j^T \mathbf{A}_j^{-1}\boldsymbol{\hat{\beta}}_j \,,
    \end{equation*}
    which yields \eqref{regrrel}.
\end{proof}
In accordance with Proposition \ref{prop3.5}, the assessment of the relevance of a subvector $\boldsymbol{\beta}_j$ is determined by \eqref{regrrel}. If $\boldsymbol{\beta}_j$ is one-dimensional, the Wald statistic $w$ corresponds to the application of a t-test \cite{fahrmeir2013generalized}. The test statistic, denoted as $t_j$, quantifies the extent to which the estimated coefficient $\hat{\beta}_j$ deviates from $H_0$, taking into account the corresponding standard error, i.e.,
    \begin{equation*}
        t_j = \frac{\hat{\beta}_j}{\sqrt{a_{jj}}} \,,
    \end{equation*}
with $a_{jj}$ the $j$-th diagonal element of $\mathbf{A}=(\boldsymbol{X}^T\boldsymbol{\Tilde{W}}\boldsymbol{X})
        \ \hspace{-0.2cm} ^{\dagger}$. According to \cite{fahrmeir2013generalized}, $t_j$ is t-distributed with $n-p$ degrees of freedom and $H_0$ is rejected at significance level $\alpha$ if
    \begin{equation*}
        |t_j| > t_{1-\alpha/2}(n-p) \,.
    \end{equation*}
Alternatively, one can also perform the Likelihood-Ratio test using the likelihood ratio $\mathcal{L}$ statistic, defined as
    \begin{equation*}
        \mathcal{L} := -2\log\Big(L(\hat{\boldsymbol{\beta}}_{H_o}|\mathbf{y})/L(\hat{\boldsymbol{\beta}}|\mathbf{y})\Big) \,,
    \end{equation*}
where ${L(\hat{\boldsymbol{\beta}}|\mathbf{y})}$ is the likelihood function for the unrestricted estimator, and ${L(\hat{\boldsymbol{\beta}}_{H_o}|\mathbf{y})}$ is the likelihood function for the restricted estimator obtained by maximizing the likelihood subject to $H_0$. Analogous to the Wald statistic, $\mathcal{L}$ follows an asymptotic $\chi^2$-distribution with $r$ degrees of freedom under the null hypothesis $H_0$.

Linear hypothesis testing serves as a tool to assess the significance of estimated regression coefficients within a specified confidence level. This approach enables the determination of whether a particular predictor variable contributes meaningfully to the model's description or if a simpler model could suffice without sacrificing essential information. In contrast, model selection criteria, described in the next subsection, aim to identify the most suitable model for predicting outcomes accurately.

\begin{remark}{\textit{Note on Statistical Power and Variable Selection}}\\
In statistical inference, it is crucial to consider two types of errors. Type I error occurs when the null hypothesis is incorrectly rejected, mistakenly identifying an effect or relationship that does not exist. This risk is quantified by the significance level~$\alpha$. Conversely, Type II error arises when one fails to detect a genuine effect, incorrectly retaining the null hypothesis. This does not necessarily mean there is no effect; rather, it may reflect the test's limitations. Decisions to exclude terms from a model based solely on statistical significance should be made cautiously. While simple models are preferred for their ease of interpretation, overly strict criteria for variable selection may lead to important predictors being overlooked. Type II error risk is often denoted by $\beta$ (distinct from regression parameters). The probability that a statistical test will correctly reject a false null hypothesis is known as the power of the test and is represented by $1 - \beta$. High power increases confidence in hypothesis test outcomes, while low power raises doubts about non-significant findings. Statistical power relies on factors such as the significance level~$\alpha$, the sample size $n$, and the population effect size (ES) \cite{cohen1992statistical}.\end{remark}

In the context of predictive modeling for HVOF coating, domain expertise is essential in addressing statistical power challenges. Due to the constraints of a small sample size, compounded by the laborious and expensive nature of experiments (cf. Section \ref{sect_application}), the statistical power of hypothesis tests is inherently limited. Consequently, a thorough examination of regression coefficients was made in collaboration with thermal coating technicians to assess the relevance of predictors, particularly in cases where the performed test might not achieve statistical significance or the model selection criterion decides to exclude the respective effect. Further techniques for calculating and enhancing statistical power in regression analysis are explored in \cite{cohen2013statistical}.

% Subsubsection - Model Selection Criteria
\subsubsection{Model Selection Criteria}\label{AIC}

In practice, it is often necessary to compare different models and select the one which provides the best balance between model fit, reflecting the agreement with the observed data, and model complexity. Various criteria can be used for this purpose, including the Akaike Information Criterion (\acrshort{aic}) \cite{akaike1974new}. The \acrshort{aic} is based on the maximized log-likelihood function $\ell(\boldsymbol{\beta}|\mathbf{y})$ and is defined by:
    \begin{equation}\label{eq:AIC}
        \text{AIC} := -2\ell(\hat{\boldsymbol{\beta}}|\mathbf{y}) + 2p \;;
    \end{equation}
where $\hat{\boldsymbol{\beta}}$ is the maximum likelihood estimate of the model parameters, and $p$ is the number of parameters in the model. The \acrshort{aic} penalizes models with many parameters, thus favoring models that fit the data well but are not too complex. Smaller \acrshort{aic} values indicate better models, with a difference of 2 between \acrshort{aic} values suggesting strong evidence in favor of the model with the lower \acrshort{aic}. However, note that the \acrshort{aic} is a relative measure of model fit and should be used for comparing models within the same class. For example, the \acrshort{aic} cannot be used to compare a gamma regression model to a Poisson regression model, since they belong to different classes.

The application of model selection criteria such as the \acrshort{aic} is valuable in predicting \acrshort{hvof} coating properties based on process conditions. While it is important to develop accurate prediction models to optimize coating performance and ensure the desired coating properties, it is worth to consider that including too many irrelevant parameters in the model can introduce disturbances and adversely affect its predictive ability.

% % % % % % % % % % % % % % % % % % % % % % % %
% Section - Assessing Predicitive Performance %
% % % % % % % % % % % % % % % % % % % % % % % 5
\section{Assessing Predictive Performance of \acrshort{hvof} Coating Models}\label{sect_pred}

To assess the predictive performance of the \acrshort{hvof} regression model, the commonly employed technique of Leave-One-Out-Cross-Validation (\acrshort{loocv}) is utilized. It allows for a comprehensive evaluation of the model's generalization ability and its accuracy in forecasting coating properties. \acrshort{loocv} is particularly suitable for evaluating the model's generalization capability when only a limited number of observations is available \cite{wong2015performance}. The \acrshort{loocv} approach is a computationally intensive procedure, requiring the model to be fit $n$ times, i.e., once for each observation in the dataset. To improve computational efficiency, alternative resampling techniques such as k-fold cross-validation may be used.

The \acrshort{loocv} procedure involves iteratively fitting the model using all observations except one, and then using the fitted model to predict the response for the left-out observation. This is repeated for each observation in the dataset, resulting in $n$ predicted responses. The predicted response for the $i$-th observation is denoted as $\hat{y}^{(-i)}$, where the superscript $(-i)$ indicates that the $i$-th observation was left out during the fitting.

The prediction error for the $i$-th observation is defined as the difference between the predicted response and the observed response, i.e., $\epsilon_i = y_i - \hat{y}^{(-i)}$. 

\begin{definition}[ \cite{hastie2009elements}]
The \acrshort{loocv} estimate of the expected out-of-sample prediction error, i.e., the expected difference between the model's predictions and the true values of new, unseen observations, is defined by:
    \begin{equation*}
        CV_{(n)} := \frac{1}{n}\sum_{i=1}^{n}\epsilon_i^2 = \frac{1}{n}\sum_{i=1}^{n}(y_i - \hat{y}^{(-i)})^2 \;;
    \end{equation*}
where $n$ is the number of observations in the dataset.
\end{definition}

The \acrshort{loocv} estimate of the expected out-of-sample prediction error is an unbiased estimator of the true out-of-sample prediction error and can be used to compare the predictive performance of different models. The smaller the value of $CV_{(n)}$, the better the predictive performance of the model. In addition to the \acrshort{loocv}, we also use the $R^2$ statistic, which measures the proportion of variance in the observed response that is explained by the model. 

\begin{definition} 
The $R^2$ statistic is defined as:
    \begin{equation} \label{R2}
        R^2 := 1 - \frac{\sum_{i=1}^{n}(y_i - \hat{y}_i)^2}{\sum_{i=1}^{n}(y_i - \bar{y})^2} \,,
    \end{equation}
where $n$ is the number of observations, $y_i$ is the observed response for the $i$-th observation, $\hat{y_i}$ is the predicted response for the $i$-th observation, and $\bar{y}$ is the mean of the observed responses.
\end{definition}

The $R^2$ statistic can take values between 0 and 1, with higher values indicating a better fit of the model to the data. However, the $R^2$ statistic can be biased towards models with more predictors, even if the predictors have little or no effect on the response. To address this issue, the adjusted $R^2$ statistic is used, which adjusts $R^2$ for the number of predictors in the model. 

\begin{definition}
The adjusted $R^2$ statistic is defined as:
    \begin{equation}
        R_{adj}^2 :=  1 - \frac{(n-1)}{n - p}(1-R^2)\,,
    \end{equation}
where $p$ is the number of predictor variables in the model, $n$ is the number of observations, and $R^2$ is the statistic, defined in \eqref{R2}.
\end{definition}

The adjusted $R^2$ takes into account the trade-off between model complexity and model fit, and provides a more reliable measure of the model's predictive performance, compared to the traditional $R^2$, since it also accounts for the number of predictors $p$.

% % % % % % % % % % % % % 
% Section - DoE and CCD %
% % % % % % % % % % % % %
\section{Application to \acrshort{hvof} Coating: Practical Implementation} \label{sect_application}

The \acrshort{hvof} process is influenced by a multitude of variables, making it challenging to identify the most important factors that actually impact coating properties. In this study, a selection of five factors was deliberately chosen, guided by the knowledge of thermal spray experts who identified these variables as significant determinants influencing the \acrshort{hvof} process. Moreover, a well-designed experiment is crucial to efficiently collect data on the effects of various factors on the process outcomes. The selection of an optimal experimental design is essential within the domain of \acrshort{hvof} coating, primarily attributed to the considerable costs and time-intensive nature associated with conducting experiments using coating materials. Furthermore, a carefully planned experimental design enables strategic allocation of available experiments, maximizing information and providing valuable insights within a limited experimental scope. 

In industrial processes, statistical design of experiments (\acrshort{doe}) is considered a reliable technique for conducting experiments. A \acrshort{doe} allows for the systematic variation of process variables (= explanatory variables), which enables a more comprehensive understanding of their impact on the outcome. In contrast to the traditional one-factor-at-a-time approach, where interaction effects between two or more explanatory variables cannot be estimated, a \acrshort{doe} approach enables concise mathematical analysis of the resulting data and facilitates the identification of significant factors and their interactions. In addition, \acrshort{doe} allows researchers to investigate complex relationships between explanatory, revealing hidden insights, and supporting the optimization of industrial processes.

% Subsection - Central Composite Design
\subsection{Central Composite Design}

The central composite design (\acrshort{ccd}), a well-established and commonly employed experimental design in the field of industrial process optimization, is utilized in this work to acquire empirical data for the \acrshort{hvof} process. Compared to other designs, the \acrshort{ccd} is particularly useful as it can be efficiently used for fitting second-order models \cite{montgomery2012design}, i.e., estimating the effects of factors and their interactions in a quadratic form. The design incorporates a systematic variation of pre-defined factors, employing three levels $(-1,0,1)$ for each factor. Additional star points are included to enable the inclusion of quadratic terms in the model \cite{montgomery2012design}. 

\begin{figure}[h]
    \centering
    \includegraphics[width=0.9\textwidth]{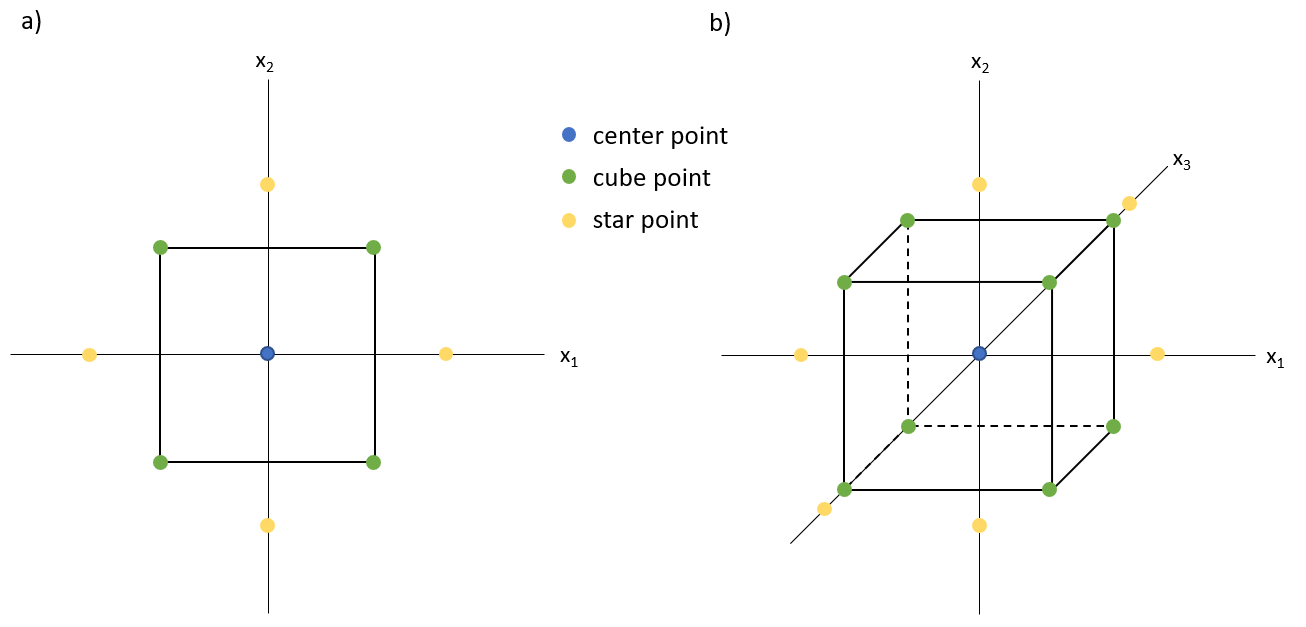}
    \caption{Central Composite Design with a) $k=2$ factors and b) $k=3$ factors.}
    \label{fig_ccd}
\end{figure}

Explanatory variables, which are given in quantitative form, are transformed into qualitative factors. The center point $x_0$, represented by the level $0$ for each factor, serves as a reference point and is used to assess the impact of factors on the system. The cube points correspond to the corners of the experimental region, represented by the levels $(-1,1)$. The star points are additional experimental points that are used to estimate the behavior beyond the linear response and to identify potential quadratic effects of factors. These points are positioned at a value of $\alpha$, where $\alpha$ is determined for a explanatory variable $x$ as
\begin{equation*}
        \alpha = x_0 \pm \delta_x\sqrt{k},
\end{equation*}
where $x_0$ is the center point, $\delta_x$ the difference between $x_0$ and the quantitative value that corresponds to $-1$, and $k$ is the number of explanatory variables under consideration. After transformation from quantitative values into qualitative ones, the values of $x_0$, $x_0\pm \delta_x$, and $\pm \alpha$ are replaced by $0$, $\pm 1$, and $\pm \sqrt{k}$ respectively. These qualitative values are then used in the design matrix $X$ to represent the explanatory variables (= factors). Figure~\ref{fig_ccd} depicts a Central Composite Design (\acrshort{ccd}) with $k=2$ and $k=3$ factors.

The number of factors $k$, which represent the process variables under investigation (cf. Section~\ref{ident}), directly determines the number of cube and star points in the \acrshort{ccd}. Specifically, there are $2^k$ cube points and $2k$ star points included in the design. Additionally, the \acrshort{ccd} consists of $n_c$ center points, where $n_c$ represents the total number of (potentially repeated) center points. To enhance the efficiency and accuracy of the design, a spherical \acrshort{ccd} was employed with the choice of $\alpha = \sqrt{k}$ concerning the star points. The spherical design allows for the estimation of effects of any factor with equal precision and reduces the risk of overemphasis on any factor. Thus, an optimal balance between precision and stability of the model parameters is obtained, which is important for receiving reliable estimates of the factor effects and their interactions. As recommended in \cite{montgomery2012design}, it is essential to randomize the experimental runs to avoid the influence of uncontrolled sources of variation.

% Subsection - Experimental Setup
\subsection{Experimental Setup}

% Subsubsection - Identification of influencing factors
\subsubsection{Identification of Influencing Factors}\label{ident}

Based on a review of the literature \cite{pukasiewicz2017influence, saaedi2010effects, zhao2004influence}, previous one-factor-at-a-time experiments, and expert knowledge by thermal spray coating experts of voestalpine TSM \cite{TSM}, five key factors, which are described in the next paragraph, were identified for systematic variation: powder feed rate (\acrshort{pfr}), stand off distance (\acrshort{sod}), lambda ($\lambda$), i.e. the stoichiometric ratio of oxygen to fuel, coating velocity (\acrshort{cv}), and total gas flow (\acrshort{tgf}). The schematic diagram in Figure~\ref{fig_ccd2} provides a comprehensive visual representation of the considered key process factors. Using these $k=5$ factors and conducting $n_c=7$ replications at the central point, a total of 49 trials were carried out, forming the \acrshort{ccd}. The experiments were conducted using a rotational setup that included a turning lathe, allowing for the application of the thermal spray coatings (cf. Figure~\ref{fig_real}).

\begin{figure}[h]
    \centering
    \includegraphics[width=0.9\textwidth]{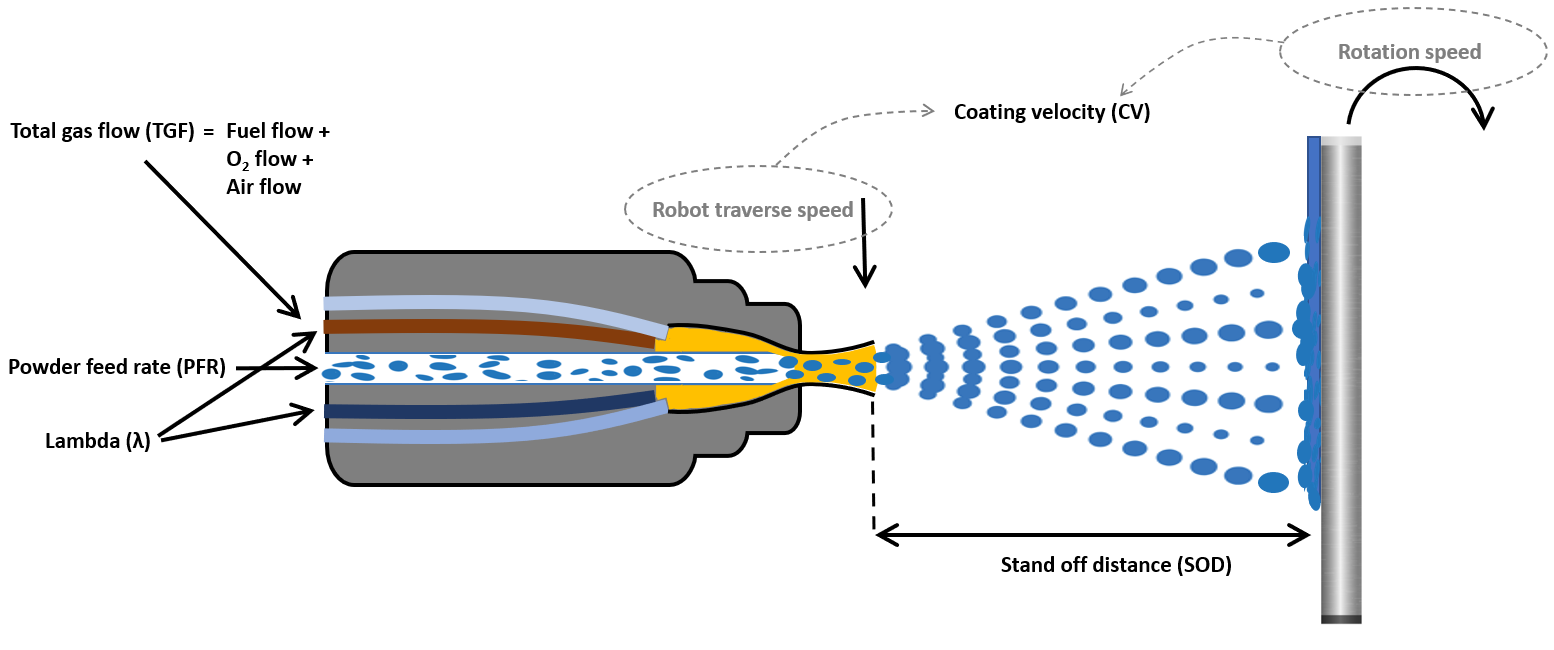}
    \caption{Illustration of the considered key factors in the HVOF coating process.}
    \label{fig_ccd2}
\end{figure}

The selected factors play a critical role in the \acrshort{hvof} coating process, exerting significant influence on the quality and performance of the resultant coatings. The \acrshort{pfr} governs the amount of coating material supplied, while the \acrshort{sod} regulates the spacing between the spray gun and the substrate. The stoichiometric ratio of oxygen to fuel ($\lambda$) ensures specific combustion conditions. Furthermore, the \acrshort{cv}, determined by the combined influence of the robot traverse speed and the rotational speed of the turning lathe (cf. Figure~\ref{fig_ccd2}), enables precise control over the deposition process. Finally, the \acrshort{tgf} is constituted by the summed gas flow of fuel, oxygen, and air, collectively governing the overall flow rate of the combustion gases.

Each of the five factors is accompanied by a designated set of predefined levels of variation, which are listed in Table~\ref{tab:CCD}. These levels were determined to cover a range of values that would effectively capture the variability and impact of these factors on the desired coating properties. The chosen levels allow for a systematic and comprehensive exploration of the parameter space. 
\begin{table}[h]
\centering
\begin{tabular}{l c c c c c}  
\toprule
\multicolumn{1}{c}{Factors} & \multicolumn{5}{c}{Coded values} \\
\cmidrule(r){2-6}
 & $-\sqrt{5}$ & $-1$ &$0$ &$1$ &$\sqrt{5}$\\
\midrule
Powder feed rate PFR [g/min] & 26.5 & 45 & 60 & 75 & 93.5 \\
         Stand off distance SOD [mm]  & 130 & 180 & 220 & 260 & 310 \\
         Lambda $\lambda$ & 0.72 & 0.84 & 0.94 & 1.04 & 1.16 \\
         Coating velocity CV [m/min] & 44 & 75 & 100 & 125 & 156 \\
         Total gas flow TGF [nl/min] & 531 & 615 & 683 & 751 & 835 \\
\bottomrule
\end{tabular}
    \caption{Levels of key factors for \acrshort{hvof} coating experiments depicted in Figure~\ref{fig_ccd}.}
    \label{tab:CCD}
\end{table}

% Subsubsection - Experimental Procedures
The \acrshort{hvof} coatings were produced using an Oerlikon Metco thermal spraying equipment, namely the DJ 2700 gas-fuel \acrshort{hvof} system with water-cooled gun assembly. The fuel gas used for these tests was propane, its amount and ratio defined by the two key factors \acrshort{tgf} and Lambda. For the process preparation, steel plates of type 1.4404 were welded onto an axis mounted on a turning lathe for rotational spraying. All samples were degreased with acetone and sandblasted with alumina before thermal spraying. The powder used for the spraying process was an agglomerated sintered tungsten carbide powder (WC-10Co-4Cr) with a grain size in the range of -45+15 µm, supplied by Oerlikon Metco. The photograph presented in Figure~\ref{fig_real} showcases the experimental setup employed during the \acrshort{hvof} coating process, wherein the dynamic engagement of the robot, turning lathe, and coating stream can be observed.

\begin{figure}[h]
    \centering
    \includegraphics[width=0.9\textwidth]{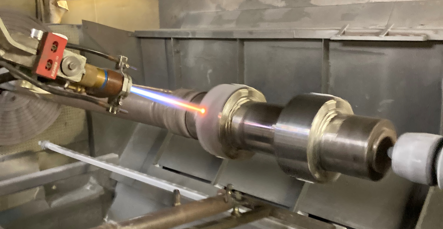}
    \caption{Photograph illustrating the experimental setup during the \acrshort{hvof} coating process, showing the robot, turning lathe, and coating stream in action.}
    \label{fig_real}
\end{figure}

Additional thermal spraying attributes like cooling, powder feed gas, pressure and number of passes, i.e., number of times the coating material was applied or sprayed onto the substrate during each experimental run, were kept constant throughout the experiments. In addition to the classic coating properties such as roughness, porosity, layer thickness, and surface hardness, the deposition rate, deposition efficiency, and in-flight particle properties such as particle velocity and particle temperature of the powder particles were measured.

Two different in-situ measurements were performed in the course of these trials. On the one hand, the in-situ particle characterization and on the other hand, the in-situ pyrometric temperature measurement. The particle characteristics were measured using a Spraywatch camera with the software SW4 (supplied by Oseir). The temperature of the sample surface was constantly measured using a Keller pyrometer. 

The surface roughness of the sprayed samples was measured using a mobil roughness tester Hommel Etamic Waveline W5. Coating hardness was assessed on the surface using a Cisam-Ernst S.r.l E-Computest mobile hardness tester, using a spheroconical diamond at a load of 5 kg and a testing time of 2 seconds. In addition to the surface characterization, cross-sections of each sample were prepared (according to internal preparation procedure WC) to analyse the coating thickness. The coating thickness and the coating porosity were determined using image analysis software, IMS Client, applied to microscopic images captured with a Zeiss Axio Observer.Z1m. Figure~\ref{fig_analytic} provides visual evidence of the observed variations in coating thickness, as captured in the microscopic images acquired from the IMS Client software.

\begin{figure}[h]
    \centering
    \includegraphics[width=0.9\textwidth]{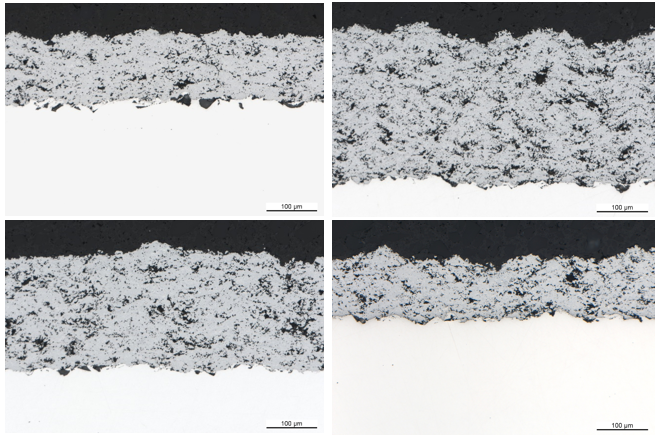}
    \caption{Microscopic images obtained from the IMS Client software, showing the observed variations in coating thickness across the cross-sectional profiles of the sprayed samples.}
    \label{fig_analytic}
\end{figure}

% % % % % % % % % % % % % % % % % % % % % % % %
% Section - Empirical Results of Experiments  %
% % % % % % % % % % % % % % % % % % % % % % % %
\section{Empirical Results of Experiments in \acrshort{hvof} Thermal Spraying}\label{sect_results}

In this section, the empirical findings derived from the comprehensive analysis of the experimental data are presented, demonstrating the effectiveness and utility of the gamma regression approach in analyzing the relationships between key process variables and coating properties. The analysis and modelling were performed using the statistical software R (version 4.2.2). The gamma regression models were implemented using the \textit{glm} function \cite{Rcore} with the Fisher scoring algorithm for estimating the regression coefficients $\hat{\boldsymbol{\beta}}$ (= ML estimates).

The properties listed in Table~\ref{tab:factors_all} serve as an overview and exemplify various potentially relevant properties of the HVOF process. This study focuses on analyzing 8 properties, which include in-flight characteristics (particle velocity and particle temperature), performance metrics (deposition rate and deposition efficiency), and coating attributes (thickness, roughness, hardness, and porosity). Analysis of the phase composition is typically not relevant for the coating material used. It should be emphasized, however, that the presented methodology holds equal relevance for the analysis of the other response variables in Table~\ref{tab:factors_all}.

Table~\ref{table:coefficients} provides a detailed illustrative example of the estimated regression coefficients and their corresponding standard errors of the deposition rate model and the deposition efficiency model. Additional tables containing analogous information for the remaining response variables can be found in the Appendix, specifically Table~\ref{table:coefficients1} and \ref{table:coefficients2}. These tables contain two kinds of models for each property, a full and a reduced version. The full model encompasses all predictor variables that can be estimated by utilizing the \acrshort{ccd} methodology, while the reduced model is derived through variable selection based on criteria such as the \acrshort{aic} and hypothesis testing for coefficient relevance, as described in Sections~\ref{testing} and \ref{AIC}. The model selection procedure involved the following steps: Initially, the full model was constructed, and non-significant coefficients were iteratively eliminated in a backward direction using the AIC as the guiding metric. If the removal of a coefficient resulted in a reduction in the AIC, the significance of the respective predictor was reassessed through a hypothesis test in consultation with thermal spray technicians. This consultation aimed to assess the practical significance of the coefficients in the context of thermal coating processes. Subsequently, a decision was made regarding the justification for the non-relevance of the coefficient, leading to its exclusion from the model. This approach was adopted due to observations indicating that the statistical power for many model coefficients fell below the recommended threshold of 0.8, as suggested by Cohen \cite{cohen2013statistical}.

Each row in Table~\ref{table:coefficients} corresponds to a specific predictor variable, i.e., main and quadratic effects of \acrshort{pfr}, \acrshort{sod}, Lambda, \acrshort{cv}, and \acrshort{tgf} and interaction effects between them. The associated coefficients (= \acrshort{ml} estimates) indicate the magnitude and direction of the predictor impact on the deposition rate and coating thickness. The values in parentheses next to the coefficients denote the respective standard errors. These regression coefficients and standard errors enable an assessment of the statistical significance of the associations between the predictor variable and the coating properties. The corresponding significance levels of the regression coefficients are denoted by asterisks. In particular, a significance level of $0.001$ is indicated by $^{***}$, $0.01$ by $^{**}$, $0.05$ by $^{*}$, and $0.1$ by $^{\bullet}$, where lower values (i.e., more stars) indicate a stronger level of statistical significance. Effects that do not exhibit significance symbols in the reduced model are considered to be of marginal relevance and have been incorporated into the analysis only due to their potential importance based on domain expertise.

\begin{table}[h]
\setlength{\tabcolsep}{2pt}
\small
\begin{center}
\begin{tabular}{l!{\hspace{0.3cm}} r!{}l!{\hspace{0.3cm}} r!{} l!{\hspace{0.3cm}} r!{} l!{\hspace{0.3cm}} r!{} l!{\hspace{0.3cm}}}
\toprule\toprule
\multicolumn{1}{c}{} & \multicolumn{4}{c}{Deposition rate} & \multicolumn{4}{c}{Deposition efficiency}\\
\cmidrule(r){2-5}
\cmidrule(r){6-9}
 & \multicolumn{2}{c}{\hspace{-0.5cm}full model} & \multicolumn{2}{c}{\hspace{-0.5cm}reduced model} & \multicolumn{2}{c}{\hspace{-0.5cm}full model} & \multicolumn{2}{c}{\hspace{-0.5cm}reduced model} \\
\midrule
Intercept      & \texttt{3.621} & \texttt{(0.024)}$^{***}$  & \texttt{3.634} & \texttt{(0.019)}$^{***}$   & \texttt{-0.474} & \texttt{(0.023)}$^{***}$  & \texttt{-0.455} & \texttt{(0.015)}$^{***}$  \\
PFR            & \texttt{0.268} & \texttt{(0.010)}$^{***}$  & \texttt{0.267} & \texttt{(0.009)}$^{***}$   & \texttt{0.006} & \texttt{(0.009)}         & \texttt{0.005} & \texttt{(0.009)}         \\
SOD            & \texttt{-0.020} & \texttt{(0.010)}$^{*}$   & \texttt{-0.021} & \texttt{(0.009)}$^{*}$    & \texttt{-0.020} & \texttt{(0.009)}$^{*}$    & \texttt{-0.021} & \texttt{(0.009)}$^{*}$    \\
Lambda         & \texttt{0.063} & \texttt{(0.010)}$^{***}$  & \texttt{0.064} & \texttt{(0.009)}$^{***}$   & \texttt{0.063} & \texttt{(0.009)}$^{***}$   & \texttt{0.064} & \texttt{(0.009)}$^{***}$   \\
CV             & \texttt{0.010} & \texttt{(0.011)}        &              &              & \texttt{0.010} & \texttt{(0.011)}         &                            \\
TGF            & \texttt{0.126} & \texttt{(0.010)}$^{***}$  & \texttt{0.126} & \texttt{(0.009)}$^{***}$   & \texttt{0.126} & \texttt{(0.009)}$^{***}$   & \texttt{0.126} & \texttt{(0.009)}$^{***}$   \\
PFR$^2$        & \texttt{-0.029} & \texttt{(0.009)}$^{**}$  & \texttt{-0.030} & \texttt{(0.009)}$^{**}$   & \texttt{0.008} & \texttt{(0.009)}         &                            \\
SOD$^2$        & \texttt{0.009} & \texttt{(0.010)}        &               &             & \texttt{0.008} & \texttt{(0.009)}         &                            \\
Lambda$^2$     & \texttt{-0.024} & \texttt{(0.010)}$^{*}$   & \texttt{-0.026} & \texttt{(0.009)}$^{**}$   & \texttt{-0.025} & \texttt{(0.009)}$^{*}$    & \texttt{-0.027} & \texttt{(0.009)}$^{**}$   \\
CV$^2$         & \texttt{-0.051} & \texttt{(0.010)}$^{***}$ & \texttt{-0.054} & \texttt{(0.009)}$^{***}$  & \texttt{-0.052} & \texttt{(0.009)}$^{***}$  & \texttt{-0.055} & \texttt{(0.008)}$^{***}$  \\
TGF$^2$        & \texttt{-0.051} & \texttt{(0.009)}$^{***}$ & \texttt{-0.052} & \texttt{(0.009)}$^{***}$  & \texttt{-0.052} & \texttt{(0.009)}$^{***}$  & \texttt{-0.054} & \texttt{(0.008)}$^{***}$  \\
PFR:SOD        & \texttt{0.013} & \texttt{(0.011)}        &               &             & \texttt{0.013} & \texttt{(0.011)}         &                            \\
PFR:Lambda     & \texttt{-0.006} & \texttt{(0.011)}       &               &             & \texttt{-0.006} & \texttt{(0.011)}        &                            \\
PFR:CV         & \texttt{0.013} & \texttt{(0.013)}        &               &             & \texttt{0.012} & \texttt{(0.012)}         &                            \\
PFR:TGF        & \texttt{0.018} & \texttt{(0.011)}        & \texttt{0.018} & \texttt{(0.010)}$^{\bullet}$ & \texttt{0.018} & \texttt{(0.011)}$^{\bullet}$ & \texttt{0.018} & \texttt{(0.010)}$^{\bullet}$ \\
SOD:Lambda     & \texttt{-0.005} & \texttt{(0.011)}       &                &            & \texttt{-0.005} & \texttt{(0.011)}        &                            \\
SOD:CV         & \texttt{0.003} & \texttt{(0.013)}        &                &            & \texttt{0.003} & \texttt{(0.012)}         &                            \\
SOD:TGF        & \texttt{0.008} & \texttt{(0.011)}        &                &            & \texttt{0.008} & \texttt{(0.011)}         &                            \\
Lambda:CV      & \texttt{-0.011} & \texttt{(0.013)}       &                &            & \texttt{-0.011} & \texttt{(0.012)}        &                            \\
Lambda:TGF     & \texttt{0.003} & \texttt{(0.011)}        &                 &           & \texttt{0.003} & \texttt{(0.011)}         &                            \\
CV:TGF         & \texttt{0.003} & \texttt{(0.013)}        &               &             & \texttt{0.003} & \texttt{(0.012)}         &                            \\
\hline 
AIC            & \multicolumn{2}{c}{\hspace{-0.5cm}\texttt{226.984}}                 & \multicolumn{2}{c}{\hspace{-0.5cm}\texttt{214.742}}                  & \multicolumn{2}{c}{\hspace{-0.5cm}\texttt{-174.236}}                 & \multicolumn{2}{c}{\hspace{-0.5cm}\texttt{-187.756}}                 \\
Log Likelihood & \multicolumn{2}{c}{\hspace{-0.5cm}\texttt{-91.492}}                 & \multicolumn{2}{c}{\hspace{-0.5cm}\texttt{-96.371}}                  & \multicolumn{2}{c}{\hspace{-0.5cm}\texttt{109.118}}                  & \multicolumn{2}{c}{\hspace{-0.5cm}\texttt{103.878}}                  \\
\hline\hline
\multicolumn{5}{l}{\scriptsize{$^{***}p<$\texttt{0.001}; $^{**}p<$\texttt{0.01}; $^{*}p<$\texttt{0.05}; $^{\bullet}p<$\texttt{0.1}}}
\end{tabular}
\caption{Estimated regression coefficients and standard errors (in brackets) for deposition rate and deposition efficiency models.}
\label{table:coefficients}
\end{center}
\end{table}

To evaluate the goodness-of-fit and performance of the models, the log-likelihood values play a crucial role. Specifically, the full models demonstrate higher log-likelihood values of $-91.492$ and $109.118$ compared to $-96.371$ and $103.878$ for the reduced models, indicating a stronger fit in capturing the observed data patterns compared to the reduced models. The reduced model exhibits lower \acrshort{aic} values of $214.742$ and $-187.756$ compared to the \acrshort{aic} values of $226.984$ and $-174.236$ obtained by the full model. These \acrshort{aic} values in Table~\ref{table:coefficients} indicate that the reduced model is favored over the full model in terms of achieving a better trade-off between model complexity and goodness-of-fit for both the deposition rate and deposition efficiency. Despite the full model potentially providing a better overall goodness-of-fit, the \acrshort{aic} criterion takes into account the complexity of the model and penalizes excessive complexity. 

Consistent with these findings, the supplementary Tables~\ref{table:coefficients1}, \ref{table:coefficients2}, and \ref{table:coefficients3} in the appendix uniformly show similar results regarding the \acrshort{aic} values and log-likelihoods. Notably, these results consistently favor the reduced models, indicating their ability to achieve a better balance between model complexity and goodness-of-fit. Moreover, across all regression models, each of the five explanatory variables demonstrates significant effects, providing robust evidence for their appropriate selection. Interestingly, the squared effects of individual factors exhibit greater statistical significance compared to the interaction effects. In addition, the results indicate that only the effects of Lambda and \acrshort{tgf} are consistently significant across all models, suggesting their shared dependence. This finding also highlights the intricate nature of the relationships involved. For instance, despite the expected correlation between deposition efficiency and deposition rate, it becomes apparent that these two properties cannot be adequately explained by the same set of parameters. This observation further emphasizes the technical challenges involved in handling and managing these interdependencies.

To ascertain whether the reduced model exhibits superior predictive performance compared to the full model, the metrics introduced in Section~\ref{sect_pred} are computed for each model individually. Table~\ref{tab:results} summarises the outcomes of the gamma regression analysis for in-flight properties (velocity and temperature), performance indicators (deposition rate and deposition efficiency), and coating properties (thickness, roughness, hardness, and porosity). Once again, the outcomes of both the full and reduced models are presented, highlighting their ability to model the studied properties. In addition to the number of coefficients $N_p$ and the model selection criterion \acrshort{aic} as in the preceding Table~\ref{table:coefficients}, this table also incorporates important performance metrics, namely $R^2$,$R^2_{adj}$, and $CV_{(n)}$, to measure the predictive quality of the regression models, as described in Section~\ref{sect_pred}.

\begin{table}[h]
\begin{tabular}{lllrrrrr}
  \hline
  \hline
   & Property & Model & \texttt{$N_p$}& AIC & \texttt{$R^2$} & \texttt{$R^2_{adj}$} &  \texttt{$CV_{(n)}$} \\ 
  \hline
   In-flight & velocity & full & \texttt{21} & \texttt{394.86} & \texttt{0.94} & \texttt{0.89} &  \texttt{288.2860} \\ 
     & velocity & reduced & \texttt{10} & \texttt{\textbf{375.58}} & \texttt{0.93} & \texttt{\textbf{0.92}} & \textbf{ \texttt{189.6886}} \\ 
     & temperature & full & \texttt{21} & \texttt{463.96} & \texttt{0.97} & \texttt{0.95} &  \texttt{1236.7553} \\ 
  & temperature & reduced & \texttt{10} & \texttt{\textbf{445.40}} & \texttt{0.97} & \texttt{\textbf{0.96}} &  \texttt{\textbf{947.5614}} \\
  \hline
  Performance  & deposition rate & full & \texttt{21} & \texttt{226.98} & \texttt{0.97} & \texttt{0.95} & \texttt{8.2556} \\ 
     & deposition rate & reduced & \texttt{10} & \texttt{\textbf{214.74}} & \texttt{0.97} & \texttt{\textbf{0.96}} & \texttt{\textbf{5.3105}} \\ 
  & deposition efficiency & full & \texttt{21} & \texttt{-174.24} & \texttt{0.91} & \texttt{0.85} & \texttt{0.0022} \\ 
    & deposition efficiency & reduced & \texttt{9} & \texttt{\textbf{-187.76}} & \texttt{0.89} & \texttt{\textbf{0.87}} & \texttt{\textbf{0.0014}} \\
    \hline
   Coating & thickness & full & \texttt{21} & \texttt{414.05} & \texttt{0.94} & \texttt{0.91} & \texttt{327.4603} \\ 
    & thickness & reduced & \texttt{10} & \texttt{\textbf{401.27}} & \texttt{0.93} & \texttt{\textbf{0.92}} & \texttt{\textbf{177.7381}} \\ 
    & roughness & full & \texttt{21} & \texttt{237.89} & \texttt{0.86} & \texttt{0.77} & \texttt{7.6912} \\ 
    & roughness & reduced & \texttt{11} & \texttt{\textbf{224.31}} & \texttt{0.85} & \texttt{\textbf{0.80}} & \texttt{\textbf{5.6868}} \\ 
    & hardness & full & \texttt{21} & \texttt{513.15} & \texttt{0.87} & \texttt{0.77} & \texttt{3044.5926} \\ 
    & hardness & reduced & \texttt{8} & \texttt{\textbf{499.42}} & \texttt{0.86} & \texttt{\textbf{0.82}} & \texttt{\textbf{1498.4070}} \\
    & porosity & full & \texttt{21} & \texttt{167.11} & \texttt{0.71} & \texttt{0.57} & \texttt{2.7142} \\ 
    & porosity & reduced & \texttt{13} & \texttt{\textbf{155.76}} & \texttt{0.69} & \texttt{\textbf{0.63}} & \texttt{\textbf{1.8773}} \\
   \hline
   \hline
\end{tabular}
\caption{Results of the gamma regression analysis for in-flight properties (velocity and temperature), performance indicators (deposition rate and deposition efficiency), and coating properties (thickness, roughness, hardness, and porosity)} using full and reduced models.
\label{tab:results}
\end{table}

Concerning the in-flight properties, both the full and reduced models demonstrate favorable results. The full model for particle velocity exhibits a high $R^2$ value of 0.94, indicating a strong fit to the observed data. However, taking into account the number of predictors $N_p$ in the model, it is advisable to consider the adjusted $R^2$ value of 0.89, which accounts for the model's complexity. Conversely, the reduced model for velocity yields a slightly lower $R^2$ value of 0.93, yet a higher adjusted $R^2$ value of 0.92 compared to the full model. These findings, coupled with lower values of the Akaike Information Criterion \acrshort{aic} and reduced out-of-sample prediction error $CV_{(n)}$, suggest that the reduced model offers superior predictive performance. Similar patterns emerge for the regression models investigating the other target variables in Table~\ref{tab:results}, consistently favoring the reduced models. According to the adjusted $R^2$, all reduced models demonstrate a good fit to the data, with values exceeding 0.8. However, the model for coating porosity yields less satisfactory results. This observation may be attributed to the volatile nature of the porosity measurement technique using Image Analysis, as discussed in \cite{ang2014review}

In addition to the findings in Table~\ref{tab:results}, Figure~\ref{fig_cv} provides a visual representation of the deposition rate predictions obtained from the full and reduced models. Each data point on the scatter plot represents an experimental trial, where the y-axis corresponds to the observed values, and the x-axis represents the \acrshort{loocv} predictions (refer to  Section~\ref{sect_pred}). The color-coded points differentiate between the center points, cube points, and star points obtained from the \acrshort{ccd}.

Upon analysis of the scatter plot, it is evident that the reduced model yields predictions that are closer to the diagonal line, indicating a higher degree of agreement between the predicted and observed values. This closer alignment implies a more accurate prediction of the deposition rate by the reduced model compared to the full model. Moreover, as expected, the prediction accuracy varies across different design points. The star points (yellow) demonstrate relatively lower predictive accuracy compared to the center points (blue) and the cube points (green), although this discrepancy is observed only in a subset of star points. This outcome underscores the challenges associated with extrapolating the model's behavior to regions outside the training data, emphasizing the need for caution when interpreting predictions for such points.

\begin{figure}[h]
    \centering
    \includegraphics[width=1\textwidth]{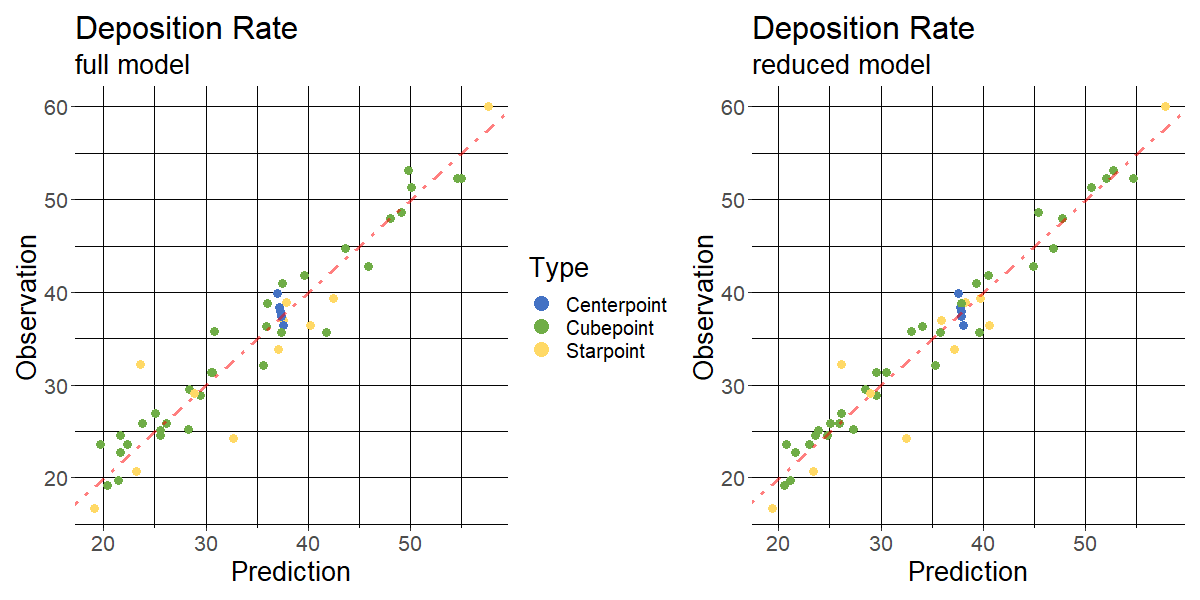}\vspace{-0.3cm}
    \caption{Scatter plot showing the comparison between observed values and \acrshort{loocv} predictions for deposition rate using the full and reduced models. The data points are color-coded based on the corresponding center points (blue), cube points (green), and star points (yellow) from the Central Composite Design.}
    \label{fig_cv}
\end{figure}

Overall, Figure~\ref{fig_cv} provides strong evidence supporting the superior predictive performance of the reduced model in estimating the deposition rate. The analysis of these visual results further strengthens the findings presented in Table~\ref{tab:results}, reinforcing the advantages of employing the reduced model in understanding and predicting the deposition rate more accurately. Furthermore, similar findings regarding the superior predictive performance of the reduced model are also observed for the other analyzed target variables.

% % % % % % % % % % % % %
% Section - Conclusion  %
% % % % % % % % % % % % %
\section{Conclusion}\label{sect_conclusion}

This study proposed a framework for modelling and predicting critical target variables in \acrshort{hvof} coating processes. By utilizing \acrshort{doe} and \acrshort{glm}s, accurate estimation of model parameters was achieved through maximum likelihood estimation. The framework incorporated a careful selection of predictor variables based on their significance and contribution to the coating properties, enhancing the model's interpretability and predictive performance. The application of this framework to experimental data from thermal spray coating experiments demonstrated its effectiveness in predicting target variables and providing insights into the relationships between factors and coating properties. The systematic variable selection process helps identify the most influential factors and eliminates irrelevant or redundant variables, simplifying the modelling process and improving the accuracy of predictions. The proposed framework has the potential to optimize thermal spray coating processes and contribute to the development of more efficient coating technologies in various industries. By developing a comprehensive understanding of the intricate interplay among process variables, material properties, and coating microstructure, manufacturers can enhance the functionality and performance of coated surfaces. This, in turn, can lead to improved product quality, extended component lifespan, and reduced maintenance costs.

In future investigations, we intend to expand our framework by introducing additional variables and exploring their interactions. This includes varying previously held constant factors, such as process gas pressures, across different levels to assess their impact. Furthermore, we plan to compare gamma regression models with alternative statistical models that require distinct distributional assumptions. While our previous work revealed the effectiveness of machine learning algorithms, particularly support vector machines, in predicting HVOF-related properties \cite{rannetbauer2024}, we aim to explore a hybrid approach combining these methodologies to further enhance predictive accuracy. Given the dynamic nature of the HVOF process, where process variables often deviate from target values, sensors will be installed within the booth to monitor these variations. Using advanced modeling techniques, this sensor data together with additional quantitative features are used to improve predictive capabilities. Further experiments will be conducted to ensure that a sufficient number of samples is available. Following satisfactory performance in predicting coating quality properties related to WCCoCr, the framework will be extended to other coating materials and their associated characteristics.

Overall, this study provides a systematic and data-driven approach to modeling and predicting coating properties in thermal spray coating. By leveraging this framework, researchers and practitioners can advance the understanding and optimization of thermal spray processes, leading to advancements in surface technology and its applications across industries. The variable selection process improves prediction accuracy and facilitates informed decision-making in the coating optimization process, contributing to the overall improvement of coating methodologies.

% % % % % % % % % % %
% Section - Support %
% % % % % % % % % % %
\section*{Declarations}

\subsection*{Acknowledgements}
The authors gratefully acknowledge voestalpine Stahl GmbH for their support through the research center, provision of materials, and financial contribution to this investigation. 

\subsection*{Funding}
SH and RR are also funded by the Austrian Science Fund (FWF): F6805-N36 within the SFB F68 ``Tomography Across the Scales''.

\subsection*{Availability of data and materials}
    The datasets generated and/or analyzed during the current study are not publicly available due to company confidentiality but are available from the corresponding author on reasonable request. The data used in this paper are proprietary and subject to confidentiality agreements. However, interested parties may request access to the data by contacting the corresponding author. Requests will be considered on a case-by-case basis, subject to company compliance and confidentiality agreements.

\subsection*{Competing interests}
 The authors declare that they have no competing interests.

\subsection*{Authors' contributions}
WR conceived and designed the study, gathered the data, performed the analysis, estimation, and modeling, and wrote the manuscript. The experiments were performed by WR and CH. CH contributed to data interpretation and provided critical revisions. SH supported with theoretical knowledge and provided critical revisions. All authors read and approved the final manuscript.
 
\clearpage

\printglossary[type=\acronymtype]
\clearpage
% % % % % % % % %
% Bibliography  %
% % % % % % % % %
\bibliographystyle{plain}
{\footnotesize
\bibliography{mybib}
}

% % % % % % %
% Appendix  %
% % % % % % %
\newpage
\appendix
\section{Supplementary Tables of Estimated Regression Coefficients and Standard Errors}

\begin{table}[h]
\setlength{\tabcolsep}{2pt}
\small
\begin{center}
\begin{tabular}{l!{\hspace{0.3cm}} r!{}l!{\hspace{0.3cm}} r!{} l!{\hspace{0.3cm}} r!{} l!{\hspace{0.3cm}} r!{} l!{\hspace{0.3cm}}}
\toprule\toprule
\multicolumn{1}{c}{} & \multicolumn{4}{c}{Particle velocity} & \multicolumn{4}{c}{Particle temperature}\\
\cmidrule(r){2-5}
\cmidrule(r){6-9}
 & \multicolumn{2}{c}{\hspace{-0.5cm}full model} & \multicolumn{2}{c}{\hspace{-0.5cm}reduced model} & \multicolumn{2}{c}{\hspace{-0.5cm}full model} & \multicolumn{2}{c}{\hspace{-0.5cm}reduced model} \\
\midrule
Intercept      & \texttt{6.126} & \texttt{(0.010)}$^{***}$    & \texttt{6.130} & \texttt{(0.005)}$^{***}$   & \texttt{7.452} & \texttt{(0.005)}$^{***}$    & \texttt{7.449} & \texttt{(0.003)}$^{***}$    \\
PFR            & \texttt{-0.006} & \texttt{(0.004)}         & \texttt{-0.006} & \texttt{(0.003)}        & \texttt{-0.007} & \texttt{(0.002)}$^{**}$    & \texttt{-0.007} & \texttt{(0.002)}$^{***}$   \\
SOD            & \texttt{0.050} & \texttt{(0.004)}$^{***}$    & \texttt{0.049} & \texttt{(0.003)}$^{***}$   & \texttt{-0.019} & \texttt{(0.002)}$^{***}$   & \texttt{-0.020} & \texttt{(0.002)}$^{***}$   \\
Lambda         & \texttt{-0.018} & \texttt{(0.004)}$^{***}$   & \texttt{-0.018} & \texttt{(0.003)}$^{***}$  & \texttt{0.025} & \texttt{(0.002)}$^{***}$    & \texttt{0.025} & \texttt{(0.002)}$^{***}$    \\
CV             & \texttt{-0.000} & \texttt{(0.005)}         &                  &          & \texttt{0.000} & \texttt{(0.002)}          &                             \\
TGF            & \texttt{0.048} & \texttt{(0.004)}$^{***}$    & \texttt{0.048} & \texttt{(0.003)}$^{***}$   & \texttt{0.051} & \texttt{(0.002)}$^{***}$    & \texttt{0.051} & \texttt{(0.002)}$^{***}$    \\
PFR$^2$        & \texttt{0.001} & \texttt{(0.004)}          &                   &         & \texttt{-0.001} & \texttt{(0.002)}         &                             \\
SOD$^2$        & \texttt{-0.030} & \texttt{(0.004)}$^{***}$   & \texttt{-0.030} & \texttt{(0.003)}$^{***}$  & \texttt{-0.001} & \texttt{(0.002)}         &                             \\
Lambda$^2$     & \texttt{0.000} & \texttt{(0.004)}          &                    &        & \texttt{-0.006} & \texttt{(0.002)}$^{*}$     & \texttt{-0.006} & \texttt{(0.002)}$^{**}$    \\
CV$^2$         & \texttt{0.002} & \texttt{(0.004)}          &                     &       & \texttt{-0.001} & \texttt{(0.002)}         &                             \\
TGF$^2$        & \texttt{-0.022} & \texttt{(0.004)}$^{***}$   & \texttt{-0.022} & \texttt{(0.003)}$^{***}$  & \texttt{-0.013} & \texttt{(0.002)}$^{***}$   & \texttt{-0.012} & \texttt{(0.002)}$^{***}$   \\
PFR:SOD        & \texttt{-0.009} & \texttt{(0.005)}$^{\bullet}$ & \texttt{-0.009} & \texttt{(0.004)}$^{*}$    & \texttt{-0.002} & \texttt{(0.002)}         &                             \\
PFR:Lambda     & \texttt{0.007} & \texttt{(0.005)}          & \texttt{0.007} & \texttt{(0.004)}$^{\bullet}$ & \texttt{0.002} & \texttt{(0.002)}          &                             \\
PFR:CV         & \texttt{0.000} & \texttt{(0.005)}          &                      &      & \texttt{0.000} & \texttt{(0.003)}          &                             \\
PFR:TGF        & \texttt{0.000} & \texttt{(0.005)}          &                       &     & \texttt{-0.004} & \texttt{(0.002)}$^{\bullet}$ & \texttt{-0.004} & \texttt{(0.002)}$^{\bullet}$ \\
SOD:Lambda     & \texttt{-0.005} & \texttt{(0.005)}         &                        &    & \texttt{0.004} & \texttt{(0.002)}          & \texttt{0.004} & \texttt{(0.002)}$^{\bullet}$  \\
SOD:CV         & \texttt{0.002} & \texttt{(0.005)}          &                         &   & \texttt{0.000} & \texttt{(0.003)}          &                             \\
SOD:TGF        & \texttt{0.014} & \texttt{(0.005)}$^{**}$     & \texttt{0.014} & \texttt{(0.004)}$^{***}$   & \texttt{0.003} & \texttt{(0.002)}          & \texttt{0.003} & \texttt{(0.002)}          \\
Lambda:CV      & \texttt{0.001} & \texttt{(0.005)}          &                   &         & \texttt{-0.001} & \texttt{(0.003)}         &                             \\
Lambda:TGF     & \texttt{0.001} & \texttt{(0.005)}          &                    &        & \texttt{-0.001} & \texttt{(0.002)}         &                             \\
CV:TGF         & \texttt{-0.000} & \texttt{(0.005)}         &                     &       & \texttt{-0.000} & \texttt{(0.003)}         &                             \\
\hline
AIC            & \multicolumn{2}{c}{\hspace{-0.5cm}\texttt{394.859}}        & \multicolumn{2}{c}{\hspace{-0.5cm}\texttt{375.575}}       & \multicolumn{2}{c}{\hspace{-0.5cm}\texttt{463.956}}        & \multicolumn{2}{c}{\hspace{-0.5cm}\texttt{445.404}}        \\
Log Likelihood & \multicolumn{2}{c}{\hspace{-0.5cm}\texttt{-175.429}}       & \multicolumn{2}{c}{\hspace{-0.5cm}\texttt{-176.788}}      & \multicolumn{2}{c}{\hspace{-0.5cm}\texttt{-209.978}}       & \multicolumn{2}{c}{\hspace{-0.5cm}\texttt{-211.702}}       \\
\hline\hline
\multicolumn{5}{l}{\scriptsize{$^{***}p<$\texttt{0.001}; $^{**}p<$\texttt{0.01}; $^{*}p<$\texttt{0.05}; $^{\bullet}p<$\texttt{0.1}}}
\end{tabular}
\caption{Estimated regression coefficients and standard errors (in brackets) for particle velocity and particle temperature models.}
\label{table:coefficients1}
\end{center}
\end{table}

\begin{table}[h]
\setlength{\tabcolsep}{2pt}
\small
\begin{center}
\begin{tabular}{l!{\hspace{0.3cm}} r!{}l!{\hspace{0.3cm}} r!{} l!{\hspace{0.3cm}} r!{} l!{\hspace{0.3cm}} r!{} l!{\hspace{0.3cm}}}
\toprule\toprule
\multicolumn{1}{c}{} & \multicolumn{4}{c}{Coating thickness} & \multicolumn{4}{c}{Coating roughness}\\
\cmidrule(r){2-5}
\cmidrule(r){6-9}
 & \multicolumn{2}{c}{\hspace{-0.5cm}full model} & \multicolumn{2}{c}{\hspace{-0.5cm}reduced model} & \multicolumn{2}{c}{\hspace{-0.5cm}full model} & \multicolumn{2}{c}{\hspace{-0.5cm}reduced model} \\
\midrule
Intercept      & \texttt{4.889} & \texttt{(0.040)}$^{***}$  & \texttt{4.893} & \texttt{(0.026)}$^{***}$  & \texttt{3.531} & \texttt{(0.026)}$^{***}$   & \texttt{3.524} & \texttt{(0.017)}$^{***}$   \\
PFR            & \texttt{0.227} & \texttt{(0.016)}$^{***}$  & \texttt{0.228} & \texttt{(0.015)}$^{***}$  & \texttt{0.021} & \texttt{(0.010)}$^{\bullet}$ & \texttt{0.023} & \texttt{(0.009)}$^{*}$     \\
SOD            & \texttt{0.007} & \texttt{(0.016)}        &              &             & \texttt{-0.008} & \texttt{(0.010)}        & \texttt{-0.007} & \texttt{(0.009)}        \\
Lambda         & \texttt{0.068} & \texttt{(0.016)}$^{***}$  & \texttt{0.066} & \texttt{(0.015)}$^{***}$  & \texttt{-0.033} & \texttt{(0.010)}$^{**}$   & \texttt{-0.034} & \texttt{(0.009)}$^{***}$  \\
CV             & \texttt{-0.266} & \texttt{(0.018)}$^{***}$ & \texttt{-0.266} & \texttt{(0.017)}$^{***}$ & \texttt{-0.042} & \texttt{(0.012)}$^{**}$   & \texttt{-0.042} & \texttt{(0.011)}$^{***}$  \\
TGF            & \texttt{0.038} & \texttt{(0.016)}$^{*}$    & \texttt{0.038} & \texttt{(0.015)}$^{*}$    & \texttt{-0.097} & \texttt{(0.010)}$^{***}$  & \texttt{-0.098} & \texttt{(0.009)}$^{***}$  \\
PFR$^2$        & \texttt{-0.008} & \texttt{(0.016)}       &                &           & \texttt{-0.007} & \texttt{(0.010)}        &                            \\
SOD$^2$        & \texttt{0.010} & \texttt{(0.016)}        &                &           & \texttt{0.032} & \texttt{(0.010)}$^{**}$    & \texttt{0.032} & \texttt{(0.009)}$^{**}$    \\
Lambda$^2$     & \texttt{-0.034} & \texttt{(0.016)}$^{*}$   & \texttt{-0.034} & \texttt{(0.015)}$^{*}$   & \texttt{0.016} & \texttt{(0.010)}         & \texttt{0.016} & \texttt{(0.009)}$^{\bullet}$ \\
CV$^2$         & \texttt{0.043} & \texttt{(0.016)}$^{*}$    & \texttt{0.043} & \texttt{(0.015)}$^{**}$   & \texttt{-0.023} & \texttt{(0.010)}$^{*}$    & \texttt{-0.022} & \texttt{(0.009)}$^{*}$    \\
TGF$^2$        & \texttt{-0.049} & \texttt{(0.016)}$^{**}$  & \texttt{-0.049} & \texttt{(0.014)}$^{**}$  & \texttt{0.001} & \texttt{(0.010)}         &                            \\
PFR:SOD        & \texttt{0.011} & \texttt{(0.018)}        &                &           & \texttt{-0.001} & \texttt{(0.012)}        &                            \\
PFR:Lambda     & \texttt{-0.025} & \texttt{(0.018)}       & \texttt{-0.025} & \texttt{(0.017)}       & \texttt{-0.001} & \texttt{(0.012)}        &                            \\
PFR:CV         & \texttt{-0.013} & \texttt{(0.021)}       &              &             & \texttt{-0.015} & \texttt{(0.014)}        &                            \\
PFR:TGF        & \texttt{0.022} & \texttt{(0.018)}        &             &              & \texttt{-0.001} & \texttt{(0.012)}        &                            \\
SOD:Lambda     & \texttt{-0.020} & \texttt{(0.018)}       &              &             & \texttt{0.012} & \texttt{(0.012)}         &                            \\
SOD:CV         & \texttt{0.018} & \texttt{(0.021)}        &             &              & \texttt{-0.012} & \texttt{(0.014)}        &                            \\
SOD:TGF        & \texttt{-0.006} & \texttt{(0.018)}       &              &             & \texttt{-0.024} & \texttt{(0.012)}$^{*}$    & \texttt{-0.024} & \texttt{(0.011)}$^{*}$    \\
Lambda:CV      & \texttt{0.016} & \texttt{(0.021)}        &             &              & \texttt{0.006} & \texttt{(0.014)}         &                            \\
Lambda:TGF     & \texttt{-0.015} & \texttt{(0.018)}       &              &             & \texttt{-0.066} & \texttt{(0.012)}$^{***}$  & \texttt{-0.067} & \texttt{(0.011)}$^{***}$  \\
CV:TGF         & \texttt{-0.033} & \texttt{(0.021)}       & \texttt{-0.033} & \texttt{(0.020)}       & \texttt{0.007} & \texttt{(0.014)}         &                            \\
\hline
AIC            & \multicolumn{2}{c}{\hspace{-0.5cm}\texttt{414.051}}                 & \multicolumn{2}{c}{\hspace{-0.5cm}\texttt{401.267}}                 & \multicolumn{2}{c}{\hspace{-0.5cm}\texttt{237.890}}                  & \multicolumn{2}{c}{\hspace{-0.5cm}\texttt{224.305}}                  \\
Log Likelihood & \multicolumn{2}{c}{\hspace{-0.5cm}\texttt{-185.026}}                & \multicolumn{2}{c}{\hspace{-0.5cm}\texttt{-189.633}}                & \multicolumn{2}{c}{\hspace{-0.5cm}\texttt{-96.945}}                  & \multicolumn{2}{c}{\hspace{-0.5cm}\texttt{-100.153}}                 \\
\hline\hline
\multicolumn{5}{l}{\scriptsize{$^{***}p<$\texttt{0.001}; $^{**}p<$\texttt{0.01}; $^{*}p<$\texttt{0.05}; $^{\bullet}p<$\texttt{0.1}}}
\end{tabular}
\caption{Estimated regression coefficients and standard errors (in brackets) for coating thickness and coating roughness models.}
\label{table:coefficients2}
\end{center}
\end{table}

\begin{table}[h]
\setlength{\tabcolsep}{2pt}
\small
\begin{center}
\begin{tabular}{l!{\hspace{0.3cm}} r!{}l!{\hspace{0.3cm}} r!{} l!{\hspace{0.3cm}} r!{} l!{\hspace{0.3cm}} r!{} l!{\hspace{0.3cm}}}
\toprule\toprule
\multicolumn{1}{c}{} & \multicolumn{4}{c}{Surface hardness} & \multicolumn{4}{c}{Coating porosity}\\
\cmidrule(r){2-5}
\cmidrule(r){6-9}
 & \multicolumn{2}{c}{\hspace{-0.5cm}full model} & \multicolumn{2}{c}{\hspace{-0.5cm}reduced model} & \multicolumn{2}{c}{\hspace{-0.5cm}full model} & \multicolumn{2}{c}{\hspace{-0.5cm}reduced model} \\
\midrule
Intercept      & \texttt{6.351} & \texttt{(0.026)}$^{***}$ & \texttt{6.352} & \texttt{(0.009)}$^{***}$    & \texttt{2.696} & \texttt{(0.028)}$^{***}$    & \texttt{2.706} & \texttt{(0.015)}$^{***}$    \\
PFR            & \texttt{-0.037} & \texttt{(0.010)}$^{**}$ & \texttt{-0.037} & \texttt{(0.010)}$^{***}$   & \texttt{0.005} & \texttt{(0.011)}          & \texttt{0.005} & \texttt{(0.010)}   \\
SOD            & \texttt{-0.036} & \texttt{(0.010)}$^{**}$ & \texttt{-0.034} & \texttt{(0.010)}$^{***}$   & \texttt{0.014} & \texttt{(0.011)}          & \texttt{0.015} & \texttt{(0.010)}  \\
Lambda         & \texttt{0.002} & \texttt{(0.010)}       & \texttt{0.003} & \texttt{(0.010)}          & \texttt{-0.029} & \texttt{(0.011)}$^{*}$     & \texttt{-0.029} & \texttt{(0.010)}$^{**}$          \\
CV             & \texttt{-0.019} & \texttt{(0.012)}      & \texttt{-0.019} & \texttt{(0.011)}$^{\bullet}$ & \texttt{0.008} & \texttt{(0.013)}          & \texttt{0.007} & \texttt{(0.012)} \\
TGF            & \texttt{0.119} & \texttt{(0.010)}$^{***}$ & \texttt{0.119} & \texttt{(0.010)}$^{***}$    & \texttt{-0.046} & \texttt{(0.011)}$^{***}$   & \texttt{-0.046} & \texttt{(0.010)}$^{***}$    \\
PFR$^2$        & \texttt{-0.011} & \texttt{(0.010)}      &              &               & \texttt{0.009} & \texttt{(0.011)}          &                             \\
SOD$^2$        & \texttt{0.003} & \texttt{(0.010)}       &           &                  & \texttt{0.004} & \texttt{(0.011)}          &                             \\
Lambda$^2$     & \texttt{0.008} & \texttt{(0.010)}       &            &                 & \texttt{-0.004} & \texttt{(0.011)}         &                             \\
CV$^2$         & \texttt{0.005} & \texttt{(0.010)}       &             &                & \texttt{0.038} & \texttt{(0.011)}$^{**}$     &  \texttt{0.036} & \texttt{(0.010)}$^{***}$                           \\
TGF$^2$        & \texttt{-0.003} & \texttt{(0.010)}      &              &               & \texttt{0.014} & \texttt{(0.011)}          &  \texttt{0.013} & \texttt{(0.010)}                           \\
PFR:SOD        & \texttt{0.013} & \texttt{(0.012)}       &               &              & \texttt{-0.014} & \texttt{(0.013)}         &                             \\
PFR:Lambda     & \texttt{-0.007} & \texttt{(0.012)}      &                &             & \texttt{0.024} & \texttt{(0.013)}$^{\bullet}$  & \texttt{0.024} & \texttt{(0.012)}$^{*}$                              \\
PFR:CV         & \texttt{0.004} & \texttt{(0.014)}       &                 &            & \texttt{0.029} & \texttt{(0.015)}$^{\bullet}$  & \texttt{0.029} & \texttt{(0.014)}$^{*}$                              \\
PFR:TGF        & \texttt{-0.012} & \texttt{(0.012)}      &                  &           & \texttt{-0.015} & \texttt{(0.013)}         &  \texttt{-0.015} & \texttt{(0.012)}                           \\
SOD:Lambda     & \texttt{-0.012} & \texttt{(0.012)}      &            &                 & \texttt{0.007} & \texttt{(0.013)}          &                             \\
SOD:CV         & \texttt{-0.014} & \texttt{(0.014)}      &             &                & \texttt{-0.008} & \texttt{(0.015)}         &                             \\
SOD:TGF        & \texttt{0.010} & \texttt{(0.012)}       &              &               & \texttt{-0.001} & \texttt{(0.013)}         &                             \\
Lambda:CV      & \texttt{-0.022} & \texttt{(0.014)}      & \texttt{-0.022} & \texttt{(0.013)}         & \texttt{-0.037} & \texttt{(0.015)}$^{*}$     & \texttt{-0.037} & \texttt{(0.014)}$^{*}$          \\
Lambda:TGF     & \texttt{0.025} & \texttt{(0.012)}$^{*}$   & \texttt{0.025} & \texttt{(0.011)}$^{*}$      & \texttt{-0.023} & \texttt{(0.013)}$^{\bullet}$ & \texttt{0.023} & \texttt{(0.012)}$^{\bullet}$      \\
CV:TGF         & \texttt{0.000} & \texttt{(0.014)}       &               &              & \texttt{-0.003} & \texttt{(0.015)}         &                             \\
\hline
AIC            & \multicolumn{2}{c}{\hspace{-0.5cm}\texttt{513.149}}                 & \multicolumn{2}{c}{\hspace{-0.5cm}\texttt{499.417}}                 & \multicolumn{2}{c}{\hspace{-0.5cm}\texttt{167.109}}                  & \multicolumn{2}{c}{\hspace{-0.5cm}\texttt{155.764}}                  \\
Log Likelihood & \multicolumn{2}{c}{\hspace{-0.5cm}\texttt{-234.574}}                & \multicolumn{2}{c}{\hspace{-0.5cm}\texttt{-240.708}}                & \multicolumn{2}{c}{\hspace{-0.5cm}\texttt{-61.554}}                  & \multicolumn{2}{c}{\hspace{-0.5cm}\texttt{-63.882}}                 \\
\hline\hline
\multicolumn{5}{l}{\scriptsize{$^{***}p<$\texttt{0.001}; $^{**}p<$\texttt{0.01}; $^{*}p<$\texttt{0.05}; $^{\bullet}p<$\texttt{0.1}}}
\end{tabular}
\caption{Estimated regression coefficients and standard errors (in brackets) for surface hardness and coating porosity models.}
\label{table:coefficients3}
\end{center}
\end{table}

\end{document}